\theoremstyle{plain}
\theoremstyle{plain}
\newtheorem{thm}{\protect\theoremname}
\theoremstyle{definition}
\theoremstyle{definition}
\newtheorem{dfn}{\protect\defname}
\setlist{itemsep=0pt,partopsep=0pt,topsep=0pt,parsep=0pt}
\providecommand{\examplename}{Example}
\providecommand{\lemmaname}{Lemma}
\providecommand{\theoremname}{Proposition}
\providecommand{\defname}{Definition}
\begin{document}
\title{\textbf{Bailouts and Redistribution}}
\author{Mikayel Sukiasyan\thanks{May 2021. I am grateful to my advisors Harald Uhlig, Mikhail Golosov and Simon Mongey for their guidance and advice.\newline Email: \texttt{msukiasyan@uchicago.edu}}}
\date{}
\maketitle

\begin{abstract}
What is the best macroprudential regulation when households differ in their exposure to profits from the financial sector? To answer the question, I study a real business cycle model with household heterogeneity and market incompleteness. In the model, shocks are amplified in states with high leverage, leading to lower investment. I consider the problem of a Ramsey planner who can finance transfers with a distortive tax on labor and levy taxes on the balance sheet components of experts. I show that the optimal tax on capital purchases is zero and the optimal policy relies mostly on a tax on deposit issuance. The latter redistributes between agents by affecting the equilibrium rate on deposits. The welfare gains from optimal policy are due to both redistribution and insurance and are larger the more unequal the initial distribution is. A simple tax rule that targets a level of leverage can achieve most of the welfare gains from optimal policy.
\end{abstract}

\newpage{}

\section{Introduction\label{sec:intro}}

The 2008-2009 financial crisis led to intense attention to the design of macroprudential policy. A set of actions taken by the Fed during the crisis -- characterized as bailouts -- brought forward debates on the necessity and design of not only ex-post policies but also preventive policies that act in the lead up to potential crises. Transfers of resources to highly-levered investors in capital are believed to avert deep recessions but are also feared to adversely affect incentives ex-ante. Much of the academic literature has examined these tradeoffs purely from the viewpoint of efficiency whereas the political debate has often touched upon issues of redistribution and equity. Who pays for bailouts and who stands to benefit? Is prudential policy justified if payers and beneficiaries do not coincide? These are some of the key questions that have shaped the discussion on Main Street versus Wall Street. 

In this paper, I study optimal macroprudential policy in an environment where households differ in their ownership of returns from risky investment. I consider a rich set of taxes that allow the policy-maker to finance transfers across the households as well as affect investment and balance sheet decisions. A crucial observation that becomes apparent in my setting is that prudential measures are also redistributive and this has to be taken into account when designing not only ex-ante but also ex-post policies. I find that the policy-maker can use the tax instruments to substantially improve the efficiency of the economy. As a consequence, the planner does not usually rely on prudential measures to prevent financial crises: under the optimal policies, high leverage and subsequent crises are mostly efficient.

I employ a real business cycle model to study these questions. The model is populated by two types of households: workers and experts. Two features of the model create interesting dynamics featuring episodes reminiscent of financial crises. First, markets are segmented so that only experts can trade and hold capital. Second, workers and experts are only allowed to trade a riskless claim between each other. In equilibrium, experts borrow from workers and invest in capital and, in doing so, expose themselves to elevated risk. When an exogenous shock destroys the value of their assets, experts must weather the impact only using their own net wealth since debt is not state-dependent. As a result, when experts' net wealth is depleted, they become more reluctant to invest as that becomes endogenously riskier for them. Deficient investment then spills over to workers through decreased wages in the future.

Transfers of resources from workers to experts when the latter are highly levered can lead to Pareto improvements. However, such transfers also amount to implicit insurance for the experts and encourage them to take riskier positions. Whether or not such ex-ante responses merit a policy response depends on the extent to which the policy improves risk sharing. Indeed, if risk sharing is perfect, there should be no disagreement between the agents on the amount of risk and investment that should be taken. Motivated by this observation, I assume that transfers cannot be levied lump-sum on workers and are instead financed by a distortive tax on labor. Further, I consider taxes on the issuance of riskless claims, as well as capital purchases to investigate the scope for prudential regulation. Assuming that tax revenues from these sources also flow to experts justifies their interpretation as macroprudential controls: an equivalence can be drawn between such taxes and quantity controls imposed on experts. But it turns out that, when households are heterogeneous, these policies also have redistributive implications! A positive tax on issuing the riskless claim, for instance, reduces the equilibrium rate of that claim and allows experts to effectively borrow more cheaply. This constitutes redistribution from workers to experts.

In the rich tax environment described above, I study the problem of a Ramsey planner operating under full commitment. I establish an important analytical result: the tax on capital purchases must be set to zero. The intuition for the result is that altering experts' investment decisions can only be beneficial due to its effect on equilibrium riskless rates and wages. The taxes on issuance of the claims and on labor, however, let the planner exhaust all gains from affecting those prices. Hence, the quantitative analysis focuses on the behavior of the remaining two tax instruments. There, I find that the tax on issuance of riskless claims finances most of the transfers under the optimal policy. The optimal policy significantly reduces the volatility of consumption of experts and, to a lesser extent, of workers. As a consequence of the reduction in risk, crises under optimal policies become somewhat less severe. Crucially, however, the policy-maker allows experts to build up leverage that is noticeably higher than that in equilibrium under no policies. When a large negative shock hits, ``ex-post'' policy kicks in to provide transfers to experts. It turns out that these transfers are predominantly financed by a positive tax on riskless claim issuance.

Building on lessons learned from studying optimal policies, I inspect the properties of simpler tax rules that capture the core features of the optimal ones. In particular, I consider a policy consisting of a simple tax on riskless claim issuance which is negative below and positive above a threshold of leverage. Such specification mirrors the fact that optimal policies aim to keep the leverage and with it, the wealth distribution, in an intermediate range for both redistributive and efficiency reasons.

Finally, I turn to welfare analysis and demonstrate that optimal policies attain most gains compared to those achieved under first-best allocations. I illustrate that gains are state-dependent and are larger the more unequal initial wealth distribution is. Moreover, the component of gains that is due to better risk sharing increases when the initial state of the economy is moved towards the high-leverage, high-risk region. I further show that the simple policy described above can attain a large fraction of the gains from optimal policies.

This paper is related to several strands of literature. First, I add to the literature on the macroeconomic implications of financial frictions. My model has broad similarities with those studied by \citet{gertler2010financial}, \citet{gertler2011model}, \citet{brunnermeier2014macroeconomic} and \citet{he2013intermediary}. The model I use is closely related to the macro-finance models studied by the latter two. Relative to their papers, I introduce a set of elements -- such as labor supply and persistent shocks -- that help cast the analysis in a more traditional RBC environment. Moreover, I use a Ramsey approach to systematically study optimal policy under explicit tax instruments. In doing so, I shed light on the extent to which policy can achieve efficiency and on issues of time-consistency. Similar to \citet{di2017uncertainty}, I emphasize that inability to contract on the aggregate state can be responsible for financial amplification. Like this paper, \citet{bocola2020risk} also examine risk sharing in a model with workers and entrepreneurs. They differ in that they assume contracts on the aggregate state can be written but limited commitment introduces pecuniary externalities. I, in contrast, maintain the assumption of market incompleteness and focus on normative analysis.

Second, my paper contributes to the literature on optimal macroprudential regulation in dynamic quantitative settings. I share with \citet{bianchi2016efficient} and \citet{bianchi2018optimal} the Ramsey approach to optimal regulation but they, like most others in that literature, study representative-agent environments. In these papers, levered investors borrow from unmodeled foreigners whose interests are not taken into account by the planner. \citet{bianchi2016efficient}, in particular, puts the question of prudential regulation at the center stage and finds that it has a very limited role. Notably, in that model, households that finance transfers to the levered firms also fully own the firms. As a result, the firms' response to the insurance offered by bailouts is efficient since they internalize that their owners are the ones paying for transfers. Instead, I consider a framework where payers and beneficiaries of transfers are different households and examine the influence of redistributive motives on optimal policy. In focusing on aspects of equity, I follow \citet{korinek2014redistributive} who were one of the first to invite attention to the redistributive conflict inherent in financial regulation.

Third, methodologically, this paper is tightly linked to the literature on Ramsey taxation under incomplete markets. \citet{aiyagari2002optimal} examined the problem of a government who needs to finance an exogenous stochastic stream of expenditures but can only issue one-period riskless bonds to the representative agent. In contrast, I assume the policy-maker has no need to finance expenditures of its own and instead is concerned with the welfare of heterogeneous agents. That the agents in my model can only trade a riskless claim, induces methodological similarity. Furthermore, in Section \ref{subsubsec:ramsey-crises}, I discuss how the emerging features of optimal taxes are shared between the two settings. A model of taxation under incomplete markets and with household heterogeneity is also studied by \citet{bhandari2018inequality}. They investigate optimal fiscal-monetary policy in a Bewley-Hugget-Aiyagari framework where the source of risk is labor income. Relative to them, I focus on limited heterogeneity and the importance of sharing investment risk. Limiting heterogeneity also allows me to solve the Ramsey planner's problem using global methods and explore how optimal policy varies over all dimensions of the state space.

Lastly, this paper is related to the literature on optimal fiscal policy in heterogeneous-agent economies such as \citet{heathcote2017optimal} and \citet{boar2020efficient}.Like them, I study how the planner must design taxes to achieve its redistributive goals. I differ in that I consider a model with financial frictions and aggregate risk. Further, although I also examine a space of taxes indexed by a few parameters, my main analysis takes a systematic Ramsey approach.

The paper proceeds as follows. Section \ref{sec:model} lays out the model and defines the tax instruments. Section \ref{sec:eqbm} describes the mechanisms at work in the model and shows how deep recessions can occur. Before turning to optimal policy analysis, Section \ref{sec:first-best} examines the properties of first-best allocations. Section \ref{sec:ramsey} contains the main analysis of optimal allocations, taxes as well as welfare implications. 

\section{Model\label{sec:model}}
In this section, I build a real business cycle model with two types of households, market segmentation and incomplete markets. I introduce in this setting tax instruments and consider the optimal planning problem. The model incorporates standard quantitative features without compromising the computational tractability of the optimal policy problem. However, maintaining tractability requires omitting a few features relative to the literature on financial crises.

\subsection{Setup\label{subsec:setup}}
Time is discreet. The economy is populated by two types of households -- workers and experts -- and final good producers. Expert households can hold existing and invest in new capital whereas the workers cannot. The two types are only allowed to trade a single one-period riskless claim denoted as deposits. This market incompleteness results in inefficient sharing of the aggregate risk. A spillover from experts onto workers operates through a mechanism where the former are occasionally too levered and consequently less willing to invest.

\subsubsection{Experts\label{subsubsec:experts}}
Experts are infinitely lived and maximize their lifetime expected utility. They begin each period with some holdings of capital $k_e$ and promised deposits $d_e$ to be repaid to workers. During the period, experts choose their consumption $c_e$, new promised deposits $d_e'$ and next period capital $k_e'$. Experts rent out their capital to firms at the rental rate $R$ per unit of capital and a $\delta$ fraction of the capital depreciates over the period. Experts can also invest $i$ units of the final good to build $\Phi(\frac{i}{k_e})k_e$ units of new capital. Finally, the experts can trade their end-of-period holdings of capital at price $q$ in the spot market. This is useful for us to derive a price of capital but note that no actual trades occur in this market since I assume all experts to be identical.

The policy-maker may impose two types of taxes on the experts: a deposit issuance tax $\tau_d$ and a capital purchases tax $\tau_k$. The proceeds from both are assumed to be rebated back to experts lump-sum. I will discuss this assumption in more detail when describing the government's budget constraint.

The dynamic program of the experts is given by

\begin{align}
    V_e(k_e, d_e, \textbf{S}) = &\max_{c_e,k_e',d_e',i} u_e(c_e) + \beta \mathbb{E}\left[V_e(\zeta' k_e', d_e', \textbf{S}')\right] \label{value-expert} \\
    s.t. \quad  c_e + \left(1 + \tau_k(\textbf{S})\right) q(\textbf{S}) k_e' + d_e = &\; \left(1-\tau_d(\textbf{S})\right)\frac{d_e'}{r(\textbf{S})} + R(\textbf{S}) k_e + (1-\delta)q(\textbf{S})k_e \label{eq:expert-budget} \\&+q(\textbf{S})\Phi\left(\frac{i}{ k_e}\right)k_e-i + T(\textbf{S}) \notag
    \\ \textbf{S}' = & \;\bm{\Gamma} (\textbf{S}) \label{trans-expert}
\end{align}

Here, $\textbf{S}$ is the vector of aggregate states and evolves according to $\bm{\Gamma}(\cdot)$. $r$ is the risk-free rate and $T$ denotes the lump-sum transfer to the experts. $\zeta'$ is an exogenous ``capital quality'' shock that allows to capture events during which the value of the assets held by experts deteriorates quickly. The details of the exogenous process will follow shortly. Finally, note that the expectation is taken over $\zeta'$ and $\textbf{S}'$.

The measurability of $r$  crucially restricts risk-sharing. $r$ is the rate that needs to be paid in the next period on deposits but it cannot depend on the realizations of next-period exogenous shocks.

The investment sub-problem is completely static and can be analyzed separately. The optimal investment solves the following problem
\begin{align*}
    \max_{i} q(\textbf{S}) \Phi\left(\frac{i}{k_e}\right)k_e - i
\end{align*}

Then the first-order condition combined with aggregation implies the pricing equation
\begin{align*}
    q(\textbf{S}) = \frac{1}{\Phi'\left(\frac{I(\textbf{S})}{K}\right)}
\end{align*}

\subsubsection{Workers\label{subsubsec:workers}}
Workers are also infinitely lived and maximize their lifetime expected utility. They begin the period with some deposit holdings $d_w$ to be paid by the experts. The workers then choose their consumption, labor $l_w$ and choose next period deposits $d_w'$. The policy-maker imposes a linear state-dependent tax $\tau_l$ on labor. The revenue from this tax is again transferred to the experts.

The workers' dynamic program is as follows 
\begin{align}
    V_w(d_w, \textbf{S}) = &\max_{c_w,l_w,d_w'} u_w(c_w,l_w) + \beta \mathbb{E}\left[V_w(d_w', \textbf{S}')\right] \label{value-worker}\\
    s.t. \quad c_w + \frac{d_w'}{r(\textbf{S})} = & \;d_w + \left(1 - \tau_l(\textbf{S})\right)W(\textbf{S}) l_w \label{eq:worker-budget}
    \\ \textbf{S}' = & \;\bm{\Gamma} (\textbf{S}) \label{trans-worker}
\end{align}

where $W$ denotes the wage rate.

Throughout, I assume  natural borrowing constraints for both agents.

\subsubsection{Final good producers\label{subsubsec:finalgood}}

Final good producers rent capital from the experts, hire labor from workers and produce final goods $Y$ using the constant returns to scale technology $F(Z,K,L)$ where $Z$ is the aggregate productivity. Clearly, the problem of the firm is static and has the following form
\begin{align*}
    \max_{Y,K,L} \quad &Y - W(\textbf{S})L - R(\textbf{S})K\\
    s.t. \quad & Y = F(Z,K,L)
\end{align*}

The first-order conditions of the problem imply the familiar conditions
\begin{align*}
    F_L(Z,K,L) &= W(\textbf{S})\\
    F_K(Z,K,L) &= R(\textbf{S})\\
\end{align*}

\subsubsection{Exogenous shock process\label{subsubsec:shocks}}

I assume that $\zeta$ takes one of two values in $\{\underline{\zeta},1\}$ and $Z$ takes values from the finite set $\{Z_L,..,Z_H\}$. Whenever $\zeta = 1$, $Z$ evolves according to a Markov process that approximates an AR(1) process. Whenever $\zeta = \underline{\zeta}$, $Z$ takes the value $\underline{Z}$. $Z$ is drawn from its marginal ergodic distribution when $\zeta$ switches from $\underline{\zeta}$ to $1$.

Following the literature on disaster risk (e.g. \citet{gourio2012disaster}), the purpose of this structure is to model a rare extreme shock that combines deterioration in experts' asset value with drop in productivity.

\subsubsection{Government\label{subsubsec:gov}}

In each period, the government collects revenues from taxes on labor, deposit issuance, capital purchases and rebates all proceeds to the experts, i.e. the budget is balanced period-by-period:
\begin{align}
    \tau_l(\textbf{S}) W(\textbf{S}) L(\textbf{S}) + \tau_d(\textbf{S}) \frac{D'(\textbf{S})}{r(\textbf{S})} + \tau_k(\textbf{S}) q(\textbf{S}) K'(\textbf{S}) = T(\textbf{S}) \label{eq:gov-budget}
\end{align}

where upper-case $D'$, $K$ and $L$ denote the aggregate quantities of the corresponding variables.

The assumption that all revenues are rebated to the experts merits a comment. As we will see, if the planner is allowed to choose state-dependent lump-sum transfers for both agents, the policy-maker can achieve first-best outcomes. Moreover, assuming that the proceeds go to the experts rather than to the workers allows to interpret the taxes as macroprudential instruments. Under this interpretation, the labor tax is used to finance a ``bailout'' to the experts while the taxes on deposit issuance and capital purchases act as state-dependent \textit{quantity controls}. Indeed, note that imposing quantity caps on deposit issuance and capital purchases is an equivalent formulation where the equilibrium prices coincide with the pre-tax prices in the current formulation.

The second assumption that deserves attention is that experts perceive the transfer to be lump-sum. The literature on macroprudential policies has commonly assumed that financial intermediaries perceive bailouts to be related to their individual balance sheet conditions. \citet{bianchi2016efficient}, for instance, considers ``systemic bailout'' policies for which transfers are proportional to the individual debt of the financial firm while the proportionality factor itself is restricted to only depend on the aggregate state (hence, the policy is called systemic). In such models, the dependence of transfers on individual conditions distorts the relevant Euler equations for the recipients of the transfers which in turn necessitates prudential controls to offset distortiona. In this paper, I abstract away from such dependence in order to focus on the role of prudential policies when a redistributive motive is present.

\subsection{Equilibrium with taxes\label{subsec:eqdef}}

We're now ready to define the recursive competitive equilibrium with given tax policies. 

\begin{dfn}
\label{dfn:eqbm} Let $\textbf{S} = \{K,D,Z,\zeta_0\}$. A recursive competitive equilibrium with tax policies $\tau_l(\textbf{S})$, $\tau_d(\textbf{S})$ and $\tau_k(\textbf{S})$ is a collection of value functions $V_e$, $V_w$, policy functions $c_e$, $k_e'$, $d_e'$, $i$, $c_w$, $d_w'$, price functions $W$, $R$, $r$, $q$ as well as a law of motion for the state, $\bm{\Gamma}$, such that
    \begin{enumerate}
        \item Policy functions solve household utility maximization problems given prices, the laws of motion and continuation values.
        \item Value functions solve the corresponding functional equations.
        \item Firms maximize their profits given prices.
        \item Aggregation holds and markets clear
        \begin{align*}
            \zeta' k_e'(K, D,\textbf{S}) = \zeta' K \left( (1-\delta)  + \Phi\left(\frac{I}{K}\right) \right)  &= K'(\textbf{S})\\
            d_e'(K, D,\textbf{S}) = d_w'(D,\textbf{S}) &= D'(\textbf{S})\\
            l_w(D,\textbf{S}) &= L(\textbf{S})\\
            i(K, D,\textbf{S}) + c_w(D,\textbf{S}) + c_e(K,D,\textbf{S}) = I(\textbf{S}) + C_w(\textbf{S}) + C_e(\textbf{S}) &= Y(\textbf{S})
        \end{align*}
        \item The law of motion of the aggregate state induced by the policy functions and the evolution of exogenous shocks is consistent with that used by the households.
    \end{enumerate}
\end{dfn}

As will become apparent later, the optimal plans arising from the Ramsey problem do not generically belong to the class of tax policies assumed in this definition. An appropriately augmented definition will be presented in Section \ref{subsec:ramsey-rec}.

\subsection{Planner's objective\label{subsec:ramseydef}}

Given an initial state $\textbf{S}_0=\{K_0,D_0,Z_0,\zeta_0\}$, the Ramsey planner ranks allocations according to the following objective
\begin{align}
    \mathbb{E}_0 \left[ \sum_{t = 0}^\infty \beta^t \left(\lambda u_e(C_{et}) + (1-\lambda) u_w(C_{wt},L_t) \right) \right]
\end{align}

where $\lambda$ is the Pareto weight assigned to the experts.

The value of $\lambda$ influences the quantitative but not the qualitative properties of the optimal plans. Note that, until now, the implicit assumption has been that the measures of workers and experts are the same. In Appendix \ref{app:non-unit}, I will show that any equilibrium with equal (unit) measures is also an equilibrium of an extended model with different measures provided that a scaling parameter for labor disutility is adjusted. In light of this, the Pareto weight parameter $\lambda$ corresponding to the objective of a purely utilitarian planner should be thought to be a function of the assumed measures and utility parameters. 

\section{Equilibrium characterization\label{sec:eqbm}}

It is useful to first consider the behavior of the model in the absence of any tax policy. The competitive equilibrium of this baseline model features an inefficiency due to insufficient risk sharing between the two agents. Due to market segmentation, this inefficiency also manifests itself in insufficient amount of investment when the experts are highly levered. The reason is that when experts' own wealth is low and capital purchases are mostly financed using household deposits, any shock to the value of the assets has a multiplicative effect on the wealth of experts. Consequently, if experts choose higher leverage they expose themselves to higher risk. This, in turn, reduces their willingness to issue deposits and invest. Reduction of investment due to inefficient risk sharing has an effect on workers due to lower wages. In such regions of the state space, it is desirable to transfer resources to experts, particularly when a bad shock has realized. This not only helps smooth their consumption but also encourages more investment.

Sharing of risk between the two types of agents is inefficient not only when experts have low wealth and therefore are highly levered, but also when workers have relatively low wealth. Whereas experts are exposed to the aggregate risk through the return on capital, workers are exposed via their labor income. Workers use savings in deposits to insure against these fluctuations and smooth consumption. When savings are low, workers are exposed to comparatively more risk then experts. In such a scenario, transfers to workers when a bad shock hits are what is desirable.

In order to describe the behavior of the model quantitatively, we will start by parametrizing the model using standard real business cycle parameters.

\subsection{Parametrization\label{subsec:calib}}

I choose a standard separable form for the utility function of the worker with CRRA utility of consumption for both agents
\begin{align*}
     u_w(c,l) &= \frac{c^{1-\gamma}}{1-\gamma} - \psi \frac{l^{1+\nu}}{1+\nu}\\ 
     u_e(c) &= \frac{c^{1-\gamma}}{1-\gamma}
\end{align*}

where $\gamma$ is the risk aversion parameter and $\nu$ is the inverse Frisch elasticity of labor supply.

The parametric form for the capital production function is as follows
\begin{align*}
    \Phi(x) &= \frac{\left(A (x - \delta) + 1 \right)^ {\xi} - 1}{A\xi} + \delta
\end{align*}

where $\xi \in (0, 1]$ relates to the elasticity of the capital price $q$ with respect to the investment-to-capital ratio. Note that this functional form satisfies $\Phi(\delta) = \delta$, $\Phi'(\delta) = 1$. That is, around $\delta$, the capital production function features no adjustment costs locally.

The production function is given by the standard Cobb-Douglas form
\begin{align*}
    F(Z,K,L) = e^Z K ^ \alpha L (1 - \alpha)
\end{align*}

where $\alpha$ is the capital share.

Conditional on $\zeta = \underline{\zeta}$, log productivity follows a Markov process that approximates the following AR(1) process on the grid $\{Z_L,\ldots,Z_H\}$ based on the \citet{cooley1995frontiers} method
\begin{align*}
    Z' = \rho_Z Z + \varepsilon_Z, \quad \varepsilon_Z \sim N(0, \sigma_\varepsilon^2)
\end{align*}

The reference period for the model is a quarter. Table \ref{tab:calib} lists all parameter values. Table \ref{tab:calib-targets} shows several characteristics of the equilibrium under this parametrization.
        
\begin{table}[h]
\begin{center}
    \begin{tabular}{ ccc } 
     \toprule
     \textbf{Parameter} & \textbf{Meaning} & \textbf{Value}  \\
     \midrule
     $\beta$ & Discount factor & 0.97\\ 

     $\gamma$ & Risk aversion & 2.00\\ 

     $\nu$ & Inverse Frisch & 1.00\\ 
     
     $\psi$ & Labor scaling parameter & 1.00\\ 

     $\xi$ & Capital production parameter & 0.80\\ 

     $A$ & Capital production parameter & 1.25\\ 

     $\alpha$ & Capital share & 0.36\\ 

     $\delta$ & Depreciation rate & 0.025\\ 

     $\rho_Z$ & Persistence of log productivity & 0.90\\ 

     $\sigma_\varepsilon$ & Std dev of innovation & 0.08\\ 
     
     $\underline{Z}$ & Disaster log productivity & -0.15\\ 

     $\underline{\zeta}$ & Capital quality in bad state & 0.95\\ 

     $\pi_{\underline{\zeta}}$ & Prob. of switching to $\underline{\zeta}$ & 0.5\%\\ 

     $\pi_{\overline{\zeta}}$ & Prob. of switching to $\underline{\zeta}$ & 34\%\\ 
     $\lambda$ & Planner's Pareto weight & 0.01\\ 
     \bottomrule 
    \end{tabular}
    \caption{\label{tab:calib} Parameters of the model.}
\end{center}
\end{table}

\subsection{Numerical solution\label{subsec:num-sol}}

Since the model features an inefficiency, standard value function iteration methods are not readily applicable. Instead, to find a global solution, I obtain a system of equilibrium conditions using the first-order conditions of the agents combined with aggregation, as well as market clearing and the law of motion of the state vector. I approximate the equilibrium objects with cubic splines and use a combination of iterative and Newton-type methods to find a solution. Appendix \ref{app:numerical-method} contains the details.

\subsection{Equilibrium objects\label{subsec:eqbm-objects}}

Figure \ref{fig:eqbm_policies} plots equilibrium investment and deposit rate over a few slices of the state space where the capital quality shock is active. The horizontal axis shows the key state variable -- deposits over capital -- that affects the experts' willingness to issue deposits and invest. Panel (b) shows that the equilibrium riskless rate decreases as deposits rise for a fixed level of capital. As deposits increase, workers become less willing to save and want to consume instead, all else equal. However, as the net worth of experts declines, they become hesitant to take on more debt and invest. This second effect is stronger and pushes the equilibrium rate down. Panel (a) shows that investment varies substantially with wealth inequality. 

\begin{figure}[h]
\begin{center}
    \includegraphics[scale = 0.70]{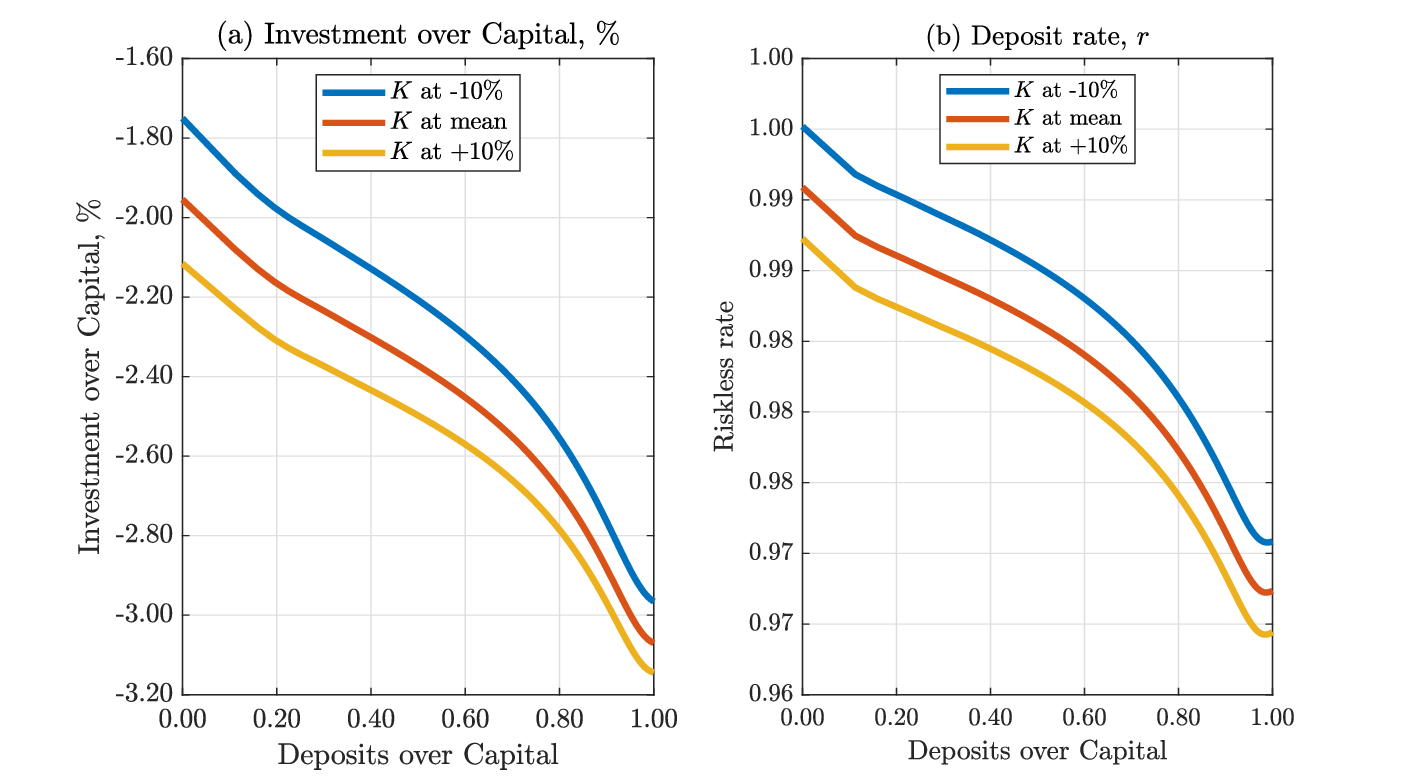}
    \caption{\label{fig:eqbm_policies} Equilibrium objects over the state space. $\zeta = \underline{\zeta}$, $Z=\underline{Z}$. \textit{Mean} refers to the ergodic mean of the competitive equilibrium without taxes.}
\end{center}
\end{figure}

To further illustrate that these patterns are related to experts' capacity to take risk, consider the changes in the \textit{risk premium}. Define risk premium as the expected excess return conditional on the current state
\begin{align}
    RP(\textbf{S}) = \mathbb{E}\left[\frac{\zeta' \left(R(\textbf{S}') + (1 - \delta)q(\textbf{S}') + \Pi (\textbf{S}')\right)}{q(\textbf{S})} \Bigg| \textbf{S}\right]  - r(\textbf{S})
\end{align}
where
\begin{align}
    \Pi(\textbf{S}') = q(\textbf{S}')\Phi\left(\frac{I(\textbf{S}')}{ K'}\right)-\frac{I(\textbf{S}')}{K'}
\end{align}
is the per unit profit from capital good production in state $\textbf{S}'$.

Panel (a) of Figure \ref{fig:risk} shows that the risk premium demanded by the experts for holding capital rises sharply as deposits increase for a given level of capital. Panel (b) of the figure demonstrates the insufficiency of risk sharing between the agents. The vertical axis shows the conditional standard deviation of log consumption in the next period, i.e. $\sigma \left[ \log C_w(\textbf{S}') | \textbf{S}\right]$ and $\sigma \left[ \log C_e(\textbf{S}') | \textbf{S}\right]$.

\begin{figure}[h]
\begin{center}
    \includegraphics[scale = 0.70]{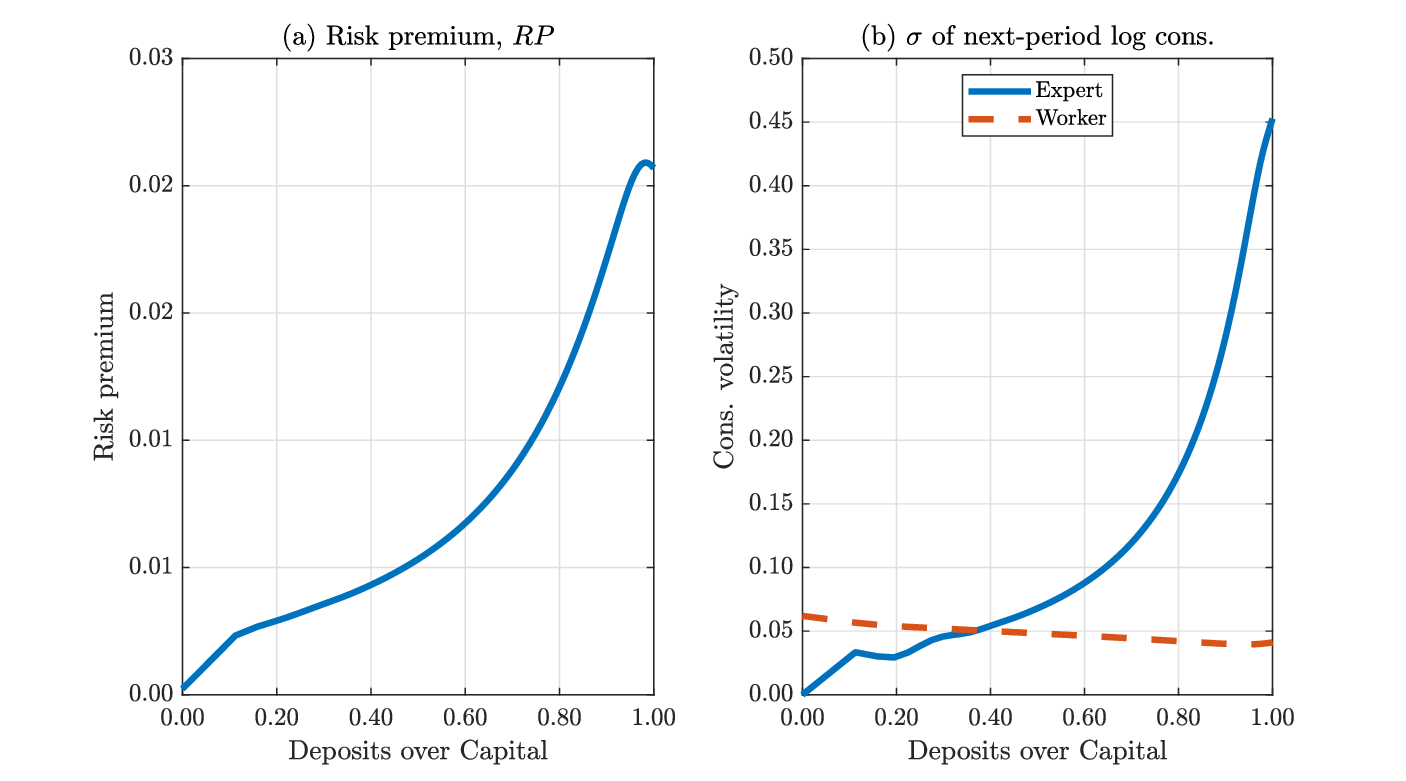}
    \caption{\label{fig:risk} Risk over the state space. $\zeta = \underline{\zeta}$, $Z=\underline{Z}$, $K$ is at mean.}
\end{center}
\end{figure}

The plot shows that the consumption volatility of experts is higher than that of workers when deposits are high and the relationship is reversed for low levels of deposits. Recall that workers are exposed to the aggregate risk through their wages and use deposits to smooth their consumption. We will see that policy attempts to move the wealth of the agents closer to the area where risk sharing is better.

It is important to understand the behavior of workers' consumption volatility when deposits are high. There are two opposing forces. First, as already discussed, more deposits allow the workers to smooth their consumption better and so this force decreases the volatility. Second, when experts have low wealth, not only their consumption but also investment become more sensitive to the exogenous shocks. In this model, this has no direct, contemporaneous effect on the volatility of next-period consumption for workers. This stands in contrast to the models often studied in the literature (e.g. \citet{jermann2012macroeconomic}, \citet{bianchi2018optimal}) where ``working capital'' constraints introduce same-period effects of adverse financing conditions on employment. However, an indirect effect still exists in our dynamic model: more volatile investment in the next period translates into more volatile (from the current-period perspective) output and consumption two periods ahead. Moreover, if a bad shock is realized in the next period, workers will want to reduce their next-period consumption and save more since the shocks are persistent. This effect becomes increasingly strong as we consider states with vanishing expert wealth. Indeed, as panel (b) of Figure \ref{fig:risk} shows, the volatility of workers' consumption starts rising in the right-most region. The same effect can be seen in panel (b) of Figure \ref{fig:eqbm_policies} which shows flattening of the equilibrium riskless rate in the same region. Correspondingly, panel (a) of Figure \ref{fig:risk} also shows the risk premium falling slightly.


\subsection{Financial crises\label{subsec:crises}}

So far, I have not introduced any shock that can clearly be interpreted as one that triggers a financial crisis. Instead, the already assumed aggregate shocks have state-dependent effects on investment and other equilibrium variables. An identical exogenous shock hitting when expert wealth is high causes milder decline in investment then that arising when expert wealth is depleted. This observation motivates the following definition of financial crises in this model.

\begin{dfn}
\label{dfn:crises} A financial crisis is an episode when investment drops by more than two standard deviations below its ergodic mean.
\end{dfn}

Let us now look at the characteristics of such episodes. To do this, I follow an event studies methodology as follows. First, simulate a long series from the model and discard a number of initial periods as burn-in. Second, identify the crisis periods to be the ones where investment is more than two standard deviations below its ergodic mean but above that level in the previous period. Third, take 40 preceding and 20 following periods around each identified crisis period. Then, Figure \ref{fig:crises_eq} plots averages of several variables over all the identified episodes.

\begin{figure}[h]
\begin{center}
    \includegraphics[scale = 0.70]{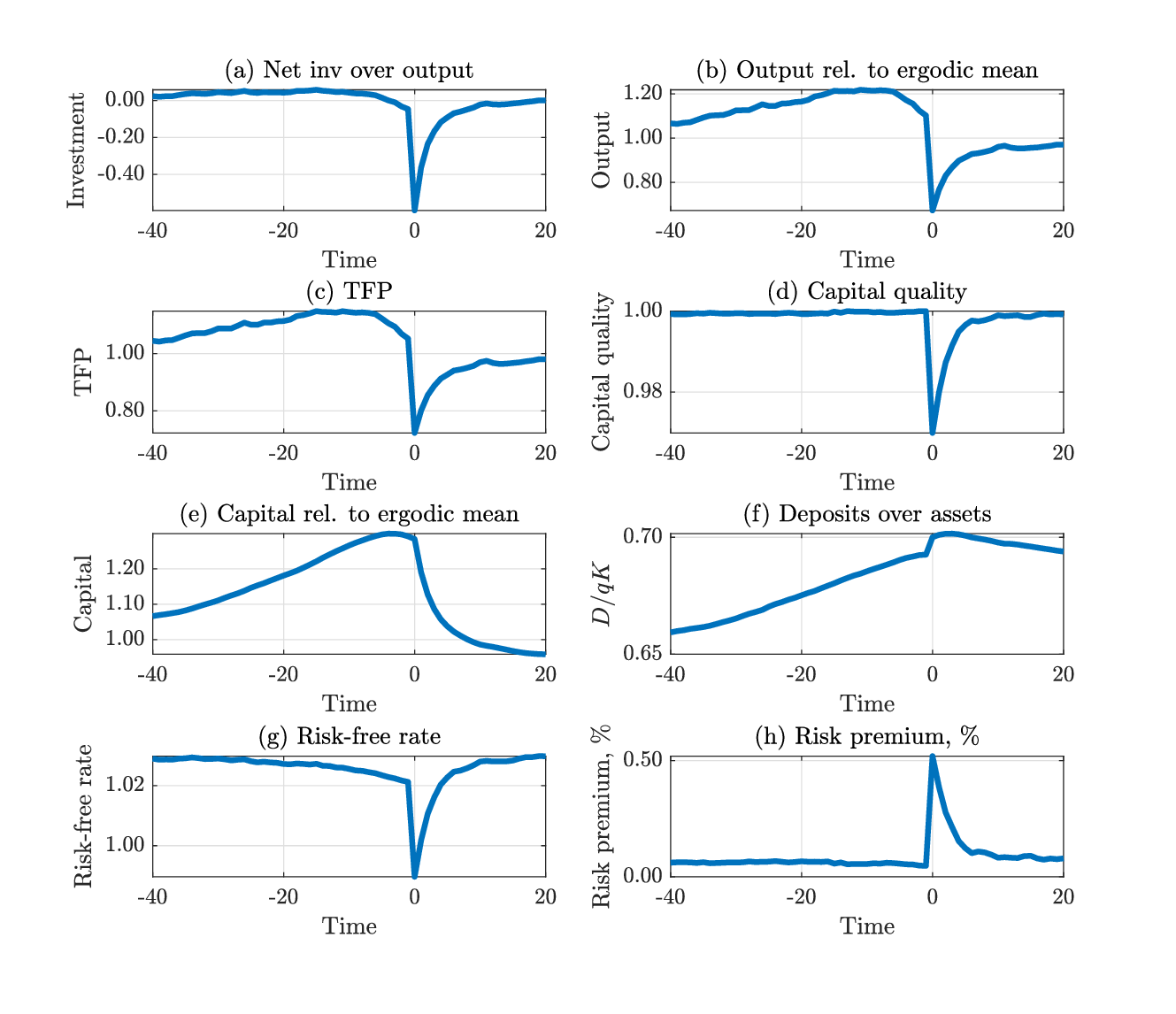}
    \caption{\label{fig:crises_eq} Typical crises defined by a two standard-deviation decline in investment.}
\end{center}
\end{figure}

Crises defined based on large declines in investment happen to be the ones where expert wealth is comparatively low. Panel (f) shows this by slightly modifying the key state variable considered before: the quantity $\frac{D}{qK}$ is more interpretable since it includes the real value of assets in the denominator\footnote{The reason why Figures such as \ref{fig:eqbm_policies} and \ref{fig:risk} plot $\frac{D}{K}$ in the denominator is that the latter is a function of the state variables while $\frac{D}{qK}$ is not guaranteed to be a function.}. What shocks drive the economy into such states? Panels (c) and (d) show that prolonged periods of high productivity combined with a capital quality shock hitting are typically associated with these episodes. Leading to the crises, experts expect future productivity to keep being high and continue to lever up. Panel (a) shows that net investment in the pre-crisis period is persistently positive. In order to lever up, experts issue more deposits. The riskless rate is, again, shaped by supply and demand from the two types. Workers are receiving higher wages and want to save more for the future while expert leverage rising is a force that pushes them to issue less deposits. These forces, combined with rising productivity, lead to increasing deposits and dropping riskless rate.

Once the crisis hits, the impact on experts' wealth is amplified by their leverage. This, in turn, contributes to the unwillingness of the experts to hold capital above and beyond that associated with the drop in productivity. Experts would like to reduce the size of their balance sheets by disinvesting and simultaneously issuing less deposits. As panel (f) shows, the experts do not immediately reduce issued deposits faster than the decrease in assets; as a result, the leverage rises further. Moreover, note that leverage is highly persistent but does, on average, go back to its ergodic mean. The role of risk, as discussed before, can be seen in the dynamics of the risk premium. Panel (h) shows that, on impact, the risk premium jumps to more than $2\%$ annualized.

Finally, Table \ref{tab:calib-targets} reports some statistics over the ergodic distribution. Note the difference between the probability of being in a $\zeta=\underline{\zeta}$ state and the probability of crisis periods as defined by Definition \ref{dfn:crises}. The reason is that when the capital quality shock hits while expert wealth is relatively high, the drop in investment is milder and does not satisfy the crisis definition.

\begin{table}[h]
\begin{center}
    \begin{tabular}{ cc } 
     \toprule
     \textbf{Variable} & \textbf{Value}  \\
     \midrule
     Mean riskless rate & 1.03\\

     Mean deposits over assets & 0.65\\ 
     
     Prob. of being in disaster, $\zeta=\underline{\zeta}$ & 1.45\%\\ 
     
     Prob. of financial crisis & 0.50\%\\ 

     \bottomrule 
    \end{tabular}
    \caption{\label{tab:calib-targets} Characteristics of the equilibrium.}
\end{center}
\end{table}

\section{First-best allocations\label{sec:first-best}}

Before investigating Ramsey-optimal policies, it is instructive to study the first-best allocations. The core friction in the model is market incompleteness. If the two agents are allowed to trade the full set of Arrow-Debreu securities instead of a single riskless claim, the resulting allocation will be efficient by the First Welfare Theorem.

Alternatively, if the planner is allowed to use a lump-sum transfer for workers instead of the distortive tax on labor, efficient allocations will obtain once again. More precisely, suppose the budget constraints \eqref{eq:expert-budget} and \eqref{eq:worker-budget}, respectively, are replaced by the following
\begin{align}
    \quad  c_e +  q(\textbf{S}) k_e' + d_e = &\; \frac{d_e'}{r(\textbf{S})} + R(\textbf{S}) k_e + (1-\delta)k_e +q(\textbf{S})\Phi\left(\frac{i}{ k_e}\right)k_e-i + T(\textbf{S}) \\
    \quad c_w + \frac{d_w'}{r(\textbf{S})} = & \;d_w + W(\textbf{S}) l_w - T(\textbf{S})
\end{align}

This flexibility then allows the planner to attain the social planner's solution where it can freely choose all allocations. The intuition for this is that the lump sum transfers allow the planner to choose consumptions arbitrarily and achieve full risk sharing. Conditional on this, the planner is not interested in distorting either labor or investment choices and, therefore, does not need additional tax instruments. Formally, given an initial state $\textbf{S} = \{K_0, D_0, Z_0, \zeta_0\}$, the planner solves
\begin{align}
    V^{sp}(K_0, D_0, Z_0, \zeta_0) &= \max \mathbb{E}_0 \left[ \sum_{t = 0}^\infty \beta^t \left(\lambda u_e(C_{et}) + (1-\lambda) u_w(C_{wt},L_t) \right) \right] \label{eq:sp-val} \\
    s.t. \;
    K_{t+1} &= \zeta_{t+1} K_t\left(\Phi\left(\frac{I_t}{K_t}\right)  +  (1-\delta)\right)\\
    F(Z_t, K_t, L_t) &= I_t + C_{et} + C_{wt} \label{eq:sp-rc}
\end{align}

Note that not only the initial state $D_0$ but also future levels of deposits become irrelevant for the economy. In fact, an autarkic equilibrium corresponding to the social planner's solution can be constructed where deposits are always at zero and both agents consume their period income augmented by the appropriate lump-sum transfers. Moreover, in this equilibrium, ratios of marginal utilities of the agents are equalized across states and risk is perfectly shared
\begin{align}
    \frac{u_{ec}(C_{wt+1})}{u_{ec}(C_{wt})} = \frac{u_{wc}(C_{wt+1},L_{t+1})}{u_{wc}(C_{wt},L_t)} \label{eq:sp-euler}
\end{align}

The Pareto weight $\lambda$ affects the allocations chosen by the social planner and takes the role of the initial state $D_0$. That is, $\lambda$ corresponds to a particular initial wealth distribution. More formally, for any given initial state, a $\lambda$ can be picked so that the allocations chosen by the social planner coincide with the equilibrium arising in the model augmented with Arrow-Debreu securities. Finally, note that because our parametrization allows for an income effect on labor supply, the Pareto weight affects not only the allocation of consumption but also labor and investment levels. In particular, as $\lambda$ increases, the planner allocates lower consumption to the worker which must correspond to higher equilibrium labor, as well as higher investment and capital levels. We will further explore the properties of the first-best allocations in Section \ref{sec:ramsey}.

\section{Optimal tax policies\label{sec:ramsey}}

In this section, I formally state the Ramsey planner's problem, show how to exploit its recursive structure and discuss the theoretical and numerical properties of the optimal plans. Importantly, I will assume that the planner operates under \textit{full commitment}.

\subsection{Sequence formulation \label{subsec:ramsey-seq}}

Based on the objective defined in Section \ref{subsec:ramseydef}, the Ramsey planner chooses sequences of taxes $\{\tau_{lt}\}$, $\{\tau_{dt}\}$ and $\{\tau_{kt}\}$ in order to maximize the objective subject to the equilibrium conditions of the model. Appendix \ref{app:eqbm-cond} shows that the problem reduces to
\begin{align}
    V^{pl}(K_0, D_0, Z_0, \zeta_0) &= \max \mathbb{E}_0 \left[ \sum_{t = 0}^\infty \beta^t \left(\lambda u_e(C_{et}) + (1-\lambda) u_w(C_{wt},L_t) \right) \right] \label{eq:val-plan}\\
    s.t. \;-\frac{u_{wl}(C_{wt},L_t)}{u_{wc}(C_{wt},L_t)} &= (1 - \tau_{lt})F_L(Z_t, K_t, L_t) \label{eq:mrs} \\
    u_{wc}(C_{wt},L_t) &= r_t \beta \mathbb{E}_t\left[ u_{wc}(C_{wt+1},L_{t+1}) \right]\\
    (1-\tau_{dt})u_{ec}(C_{et}) &= r_t \beta \mathbb{E}_t\left[ u_{ec}(C_{et+1}) \right]\label{eq:euler-expert} \\
    \mathbb{E}_t\left[u_{ec}(C_{et+1})\right]\frac{1 + \tau_{kt}}{1-\tau_{dt}}r_t &= \mathbb{E}_t\left[u_{ec}(C_{et+1}) \left(\frac{\zeta_{t+1} \left(R_{t+1} + (1 - \delta)q_{t+1} + \Pi_{t+1}\right)}{q_t} \right)\right] \label{eq:arbitrage}\\
    1 &= q_t\Phi'\left(\frac{Y_t - C_{et} + C_{wt}}{K_t}\right) \label{eq:cap-price}\\
    K_{t+1} &= \zeta_{t+1} K_t\left(\Phi\left(\frac{Y_t - C_{et} + C_{wt}}{K_t}\right)  +  (1-\delta)\right)\\
    \frac{D_{t+1}}{r_t} + C_{wt} &= D_t + F_L(Z_t,K_t,L_t) L_t (1-\tau_{lt}) \label{eq:dlom}\\
    D_{t+1} &\in [\underline{D}_{t+1}, \overline{D}_{t+1}] \label{eq:ram-bounds}
\end{align}

where it is assumed that deposits satisfy natural borrowing limits and the maximum is taken over the sequences $\{\tau_{lt}\},\{\tau_{dt}\},\{\tau_{kt}\},\{\tau_{kt}\}$ as well as all equilibrium quantities and prices. Observe that \eqref{eq:dlom} is the law of motion of workers' deposits and, therefore, does not contain the transfer $T_t$\footnote{The corresponding law of motion for experts is dropped by an application of Walras law.}.

As usual, it is convenient to rewrite the problem into the primal form so that the planner chooses sequences of allocations subject to technological and implementability constraints. In doing so, several conditions from above can be eliminated. First, we can express $\tau_{lt}$ from condition \eqref{eq:mrs} and substitute into condition \eqref{eq:dlom}. Then, \eqref{eq:mrs} can be dropped because the labor tax can always be set residually to satisfy the condition. Similarly, the availability of $\tau_{dt}$ and $\tau_{kt}$ allows us to get rid of conditions \eqref{eq:euler-expert} and \eqref{eq:arbitrage}. Moreover, the constraint \eqref{eq:cap-price} is only a pricing equation for capital and can be dropped since $q_t$ does not appear anywhere else after eliminating \eqref{eq:arbitrage}.

The primal form of the problem can then be written as follows
\begin{align}
    V^{pl}(K_0, D_0, Z_0, \zeta_0) &= \max \mathbb{E}_0 \left[ \sum_{t = 0}^\infty \beta^t \left(\lambda u_e(C_{et}) + (1-\lambda) u_w(C_{wt},L_t) \right) \right] \label{eq:ram-val} \\
    s.t. \;\beta \mathbb{E}_t[u_{wc}(C_{wt+1}, L_{t+1})]D_{t+1} &= D_t u_{wc}(C_{wt}, L_t) -u_{wc}(C_{wt}, L_t) C_{wt} - u_{wl}(C_{wt}, L_t) L_t \label{eq:impl} \\
    K_{t+1} &= \zeta_{t+1} K_t\left(\Phi\left(\frac{I_t}{K_t}\right)  +  (1-\delta)\right)\\
    F(Z_t, K_t, L_t) &= I_t + C_{et} + C_{wt}\\
    D_{t+1} &\in [\underline{D}_{t+1}, \overline{D}_{t+1}] \label{eq:ram-primal-bounds}
\end{align}

where the maximum is now taken only over the allocation sequences and taxes do not appear in the problem.

Constraint \eqref{eq:impl} is the implementability constraint that commonly appears in the literature on optimal taxation with incomplete markets (e.g. \citet{aiyagari2002optimal}). Also called the \textit{measurability} constraint, this equation reflects the riskless nature of deposits. The left-hand side of the condition is an expectation over the shocks to be realized in period $t+1$ while $D_{t+1}$ is only measurable with respect to the information available at time $t$. Consequently, the planner's choices of worker consumption and labor across shock realizations for period $t+1$ have to all satisfy a common restriction.

\subsection{Recursive formulation \label{subsec:ramsey-rec}}

A recursive formulation is desirable in order to solve the planner's problem numerically. However, the constraint \eqref{eq:impl} precludes the transition to a recursive form using standard methods. The difficulty arises due to both $t$ and $t+1$ choice variables being in the equation. Such \textit{forward-looking} constraints often occur when studying optimal contracts and Ramsey-optimal policies. Several techniques have been developed to deal with such structures. Here, I will use the Lagrangian approach from \citet{marcet2019recursive}. 

Letting $\eta_t$ be the Lagrange multiplier associated with \eqref{eq:impl}, the following recursive saddle-point problem becomes the key object
\begin{align}
    \widetilde{V}^{pl}(K, D, Z, \zeta, \mu) &= \inf_{\eta} \sup_{C_e,C_w,L,I,D'} \lambda u_e(C_e) + (1-\lambda) u_w(C_w,L) \label{eq:saddle-begin} \\&- \mu D u_{wc}(C_w,L) +\eta \left(D u_{wc}(C_w,L) - u_{wc}(C_w,L)C_w - u_{wl}(C_w,L)L \right)  \notag  \\& + \beta \mathbb{E}\left[\widetilde{V}^{pl}(K', D', Z', \eta) \right] \notag \\
    s.t. \;
    K' &= \zeta' K\left(\Phi\left(\frac{I}{K}\right)  +  (1-\delta)\right)\\
    F(Z, K, L) &= I + C_{e} + C_{w}\\
    D' &\in [\underline{D}(\widetilde{\textbf{S}}), \overline{D}(\widetilde{\textbf{S}})] \label{eq:saddle-end}
\end{align}

where $\widetilde{\textbf{S}}=\{K, D, Z, \zeta, \mu\}$ is the original state vector of the economy augmented by a Lagrange multiplier.

Under some conditions outlined in \citet{marcet2019recursive}, the value in the original problem is given by
\begin{align}
    {V}^{pl}(K_0, D_0, Z_0, \zeta_0) = \widetilde{V}^{pl}(K_0, D_0, Z_0, \zeta_0, 0)
\end{align}

and optimal allocations are found similarly. Moreover, the corresponding optimal tax policies can be backed out by plugging in the optimal allocations into the corresponding equilibrium conditions. That procedure yields
\begin{align}
    \tau_{k}\big(\widetilde{\textbf{S}}\big) &= -1 + \mathbb{E}\left[\beta \frac{u_{ec}\left(C_{e}\big(\widetilde{\textbf{S}}'\big)\right)}{u_{ec}\left(C_{e}\big(\widetilde{\textbf{S}}\big)\right)} \left(\frac{\zeta' \left(R\big(\widetilde{\textbf{S}}\big) + (1 - \delta)q\big(\widetilde{\textbf{S}}'\big) + \Pi\big(\widetilde{\textbf{S}}'\big)\right)}{q\big(\widetilde{\textbf{S}}\big)} \right)\right] \label{eq:cap-tax}\\
    \tau_l\big(\widetilde{\textbf{S}}\big) &= 1+\frac{u_{wl}\left(C_w\big(\widetilde{\textbf{S}}\big),L\big(\widetilde{\textbf{S}}\big)\right)}{u_{wc}\left(C_w\big(\widetilde{\textbf{S}}\big),L\big(\widetilde{\textbf{S}}\big)\right)F_L\left(Z, K, L\big(\widetilde{\textbf{S}}\big)\right)} \label{eq:lab-tax}\\
    \tau_d\big(\widetilde{\textbf{S}}\big) &= 1 - \frac{\mathbb{E}\left[ u_{ec}\left(C_e\big(\widetilde{\textbf{S}}'\big)\right) \right]}{u_{ec}\left(C_e\big(\widetilde{\textbf{S}}\big)\right)} \bigg/ \frac{\mathbb{E}\left[ u_{ec}\left(C_w\big(\widetilde{\textbf{S}}'\big),L\big(\widetilde{\textbf{S}}'\big)\right) \right]}{u_{ec}\left(C_w\big(\widetilde{\textbf{S}}\big),L\big(\widetilde{\textbf{S}}\big)\right)}\label{eq:dep-tax}
\end{align}

where $R\big(\widetilde{\textbf{S}}\big)$, $\Pi\big(\widetilde{\textbf{S}}\big)$ and $q\big(\widetilde{\textbf{S}}\big)$ are functions of allocations defined as before.

The recursive saddle-point problem above is simply a rewritten form of the Lagrangian associated with the original problem. Splitting the condition \eqref{eq:impl} into current period (left-hand side) and next period (right-hand side) components is the central insight for this transition. The multiplier $\eta$ becomes a state variable with the state $\mu$ representing the multiplier on the implementability constraint associated with the previous period. It is passed along to the next period and reflects promises made by the planner in the previous periods.

Recall that the forward-looking constraint facing the planner is essentially the Euler equation of the worker type. When choosing the state-dependent consumption for workers between any two periods, the planner must make sure to satisfy this optimality condition. Intuitively, the planner ``convinces'' the workers to choose the desirable level of consumption in the current period by ``promising'' certain outcomes for the next period. The extent to which the Lagrange multiplier differs from zero is related to the extent of \textit{time-inconsistency} in the optimal plans. Indeed, whenever the planner can benefit from ignoring previous commitments and re-optimizing given the current state of the economy, we must have
\begin{align}
    \widetilde{V}^{pl}(K, D, Z, \zeta, 0) > \widetilde{V}^{pl}(K, D, Z, \zeta, \mu)
\end{align}

Notice again that the state vector that can represent the economy under optimal policy, $\widetilde{\textbf{S}}$, is richer than the one that represents the economy under no tax policies, $\textbf{S}$. Accordingly, the equilibria induced by Ramsey-optimal tax sequences do not generally satisfy the recursive equilibrium Definition \ref{dfn:eqbm} which presumes tax policies only depend on $\textbf{S}$. The following definition formalizes the notion of recursive competitive equilibria under Ramsey policies.

\begin{dfn}
Let $\tau_k^*\big(\widetilde{\textbf{S}}\big)$, $\tau_l^*\big(\widetilde{\textbf{S}}\big)$ and $\tau_d^*\big(\widetilde{\textbf{S}}\big)$ be the tax policies and $\eta^*\big(\widetilde{\textbf{S}}\big)$ be the Lagrange multiplier that solve the planner's problem. The corresponding recursive competitive equilibrium is a collection of value, policy and price functions of the augmented state $\widetilde{\textbf{S}}$, as well as a law of motion for the state, $\bm{\widetilde{\Gamma}}$, such that
    \begin{enumerate}
        \item Policy functions solve household utility maximization problems given prices, the laws of motion and continuation values.
        \item Value functions solve the corresponding functional equations.
        \item Firms maximize their profits given prices.
        \item Aggregation holds and markets clear.
        \item The law of motion of the aggregate state induced by the policy functions, the evolution of exogenous shocks and the Lagrange multiplier function $\eta^*\big(\widetilde{\textbf{S}}\big)$ is consistent with that used by the households.
    \end{enumerate}
\end{dfn}

\subsection{Characterization \label{subsec:ramsey-char}}

This subsection characterizes the optimal plans -- both analytically and numerically -- and compares the allocations obtained by the planner to the competitive equilibrium as well as first-best allocations. I then turn to comparisons of crisis dynamics.

\subsubsection{An analytical result \label{subsubsec:ramsey-thm}}

I begin by stating a crucial analytical result which characterizes the optimal tax on capital purchases.

\begin{thm}
\label{thm:cap-tax}
For any initial state $\{K_0, D_0, Z_0, \zeta_0\}$, the optimal tax on capital purchases, $\tau_{kt}$ is zero for all $t\geq 0$ and for all paths of shock realizations.
\end{thm}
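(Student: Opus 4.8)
The plan is to work directly with the recursive saddle-point problem \eqref{eq:saddle-begin}--\eqref{eq:saddle-end} and to show that its first-order conditions for $C_e$ and $I$ reproduce exactly the experts' undistorted capital Euler equation, so that the tax backed out through \eqref{eq:cap-tax} vanishes. The decisive structural observation is that neither the implementability term $\eta\left(D u_{wc} - u_{wc}C_w - u_{wl}L\right)$ nor the promise-keeping term $-\mu D u_{wc}$ involves $C_e$ or $I$: the expert's consumption and investment enter the planner's problem \emph{only} through the resource constraint and the capital law of motion. Consequently, when the planner optimizes over these two variables, the multipliers $\mu$ and $\eta$ drop out of the relevant conditions, and the planner treats capital exactly as an untaxed expert would.

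I would first attach a multiplier $\theta$ to the resource constraint $F(Z,K,L)=I+C_e+C_w$ and substitute the law of motion $K'=\zeta' K\left(\Phi(I/K)+(1-\delta)\right)$ into the continuation value. The condition for $C_e$ gives $\theta=\lambda u_{ec}(C_e)$, and the condition for $I$, using $\partial K'/\partial I = \zeta'\Phi'(I/K)$, yields
\begin{align*}
    \theta = \beta\,\Phi'\!\left(\tfrac{I}{K}\right)\mathbb{E}\left[\widetilde{V}^{pl}_K(K',D',Z',\eta)\,\zeta'\right].
\end{align*}
Writing $q=1/\Phi'(I/K)$ and combining the two conditions gives $q\,\lambda u_{ec}(C_e)=\beta\,\mathbb{E}[\widetilde{V}^{pl}_K\,\zeta']$, which expresses the planner's shadow value of installed capital in terms of expert marginal utility.

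The crux is the envelope condition for $\widetilde{V}^{pl}_K$. Since $K$ enters only through $F_K=R$ (with weight $\theta$) and through the law of motion, differentiation gives
\begin{align*}
    \widetilde{V}^{pl}_K = \theta\,R + \beta\,\widehat{K}_K\,\mathbb{E}\left[\widetilde{V}^{pl}_K(K',\cdots)\,\zeta'\right],
    \qquad \widehat{K}_K = \Phi\!\left(\tfrac{I}{K}\right)+(1-\delta)-\tfrac{I}{K}\Phi'\!\left(\tfrac{I}{K}\right).
\end{align*}
The key algebraic identity, using $q\Phi'(I/K)=1$ and $\Pi=q\Phi(I/K)-I/K$, is that $q\,\widehat{K}_K=\Pi+(1-\delta)q$. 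Substituting $\beta\mathbb{E}[\widetilde{V}^{pl}_K\zeta']=q\lambda u_{ec}(C_e)$ into the envelope equation then collapses it to
\begin{align*}
    \widetilde{V}^{pl}_K = \lambda u_{ec}(C_e)\left(R+(1-\delta)q+\Pi\right).
\end{align*}

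Finally, I would evaluate this expression one period ahead, insert it into $q\,\lambda u_{ec}(C_e)=\beta\,\mathbb{E}[\widetilde{V}^{pl}_K\,\zeta']$, and cancel $\lambda$ to obtain
\begin{align*}
    q\,u_{ec}(C_e)=\beta\,\mathbb{E}\left[u_{ec}(C_e')\,\zeta'\left(R'+(1-\delta)q'+\Pi'\right)\right].
\end{align*}
This makes the expectation on the right-hand side of \eqref{eq:cap-tax} equal to one, so that $\tau_k=0$. Because every step is a stationary first-order or envelope condition holding at each augmented state $\widetilde{\textbf{S}}$ reached along the optimal path, the conclusion extends to all $t\ge 0$ and all shock realizations. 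I expect the main obstacle to be the envelope derivation---specifically verifying the identity $q\,\widehat{K}_K=\Pi+(1-\delta)q$ and confirming rigorously that the $\mu$- and $\eta$-terms contribute nothing to the $C_e$ and $I$ conditions, so that the expert's intertemporal margin for capital is left entirely undistorted.
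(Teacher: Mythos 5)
Your proof is correct and rests on the same essential idea as the paper's: because $C_e$ and $I$ enter the primal problem only through the resource constraint and the capital law of motion (the implementability and promise-keeping terms involve only $C_w$, $L$ and $D$), the planner's optimality conditions for the capital margin reproduce the expert's undistorted Euler equation, which via \eqref{eq:cap-tax} forces $\tau_k=0$ at every augmented state on the optimal path. The only difference is executional: you chain first-order and envelope conditions in the recursive saddle-point problem \eqref{eq:saddle-begin}--\eqref{eq:saddle-end}, whereas Appendix \ref{app:cap-tax-proof} differentiates the sequence primal problem \eqref{eq:ram-val}--\eqref{eq:ram-primal-bounds} and compares the resulting Euler equation with the equilibrium conditions \eqref{eq:app-expert-euler}--\eqref{eq:app-arbitrage}; these are equivalent (your envelope condition for $K$ is exactly the sequence first-order condition for $K_{t+1}$), and your version of the Euler equation, carrying $(1-\delta)q'$ rather than $(1-\delta)$, is in fact the correct form of the expression that appears with a small typo in the paper's appendix.
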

\begin{proof}
See Appendix \ref{app:cap-tax-proof}.
\end{proof}

This result is predicated on the labor and deposit issuance taxes being set optimally. The planner would like to equalize the marginal utilities of consumption (with Pareto weights applied) between workers and experts. How can the tax on capital purchases help achieve this objective? Recall that, as stated by the budget constraint of the experts, \eqref{eq:expert-budget}, and the government, \eqref{eq:gov-budget}, tax revenues are transferred to the experts. Hence, the tax on capital purchases cannot redistribute resources directly. However, the tax affects two key equilibrium objects that do redistribute -- the wage $W$ and the riskless rate $r$. It affects the former dynamically by altering investment\footnote{In fact, the capital purchases tax allows the planner to control investment arbitrarily albeit not without consequences.} while the latter changes -- both contemporaneously and dynamically -- due to the arbitrage condition \eqref{eq:arbitrage}. The essential difference between the two channels is that whereas the capital purchases tax affects wages and riskless rates indirectly, the taxes on labor and deposit issuance, respectively, affect prices directly. 

Consider the original problem \eqref{eq:mrs}-\eqref{eq:ram-bounds}. Letting the planner maximize over the labor tax and the wage is equivalent to letting the planner maximize only over the after-tax wage\footnote{In the system \eqref{eq:mrs}-\eqref{eq:ram-bounds}, the wage has already been substituted out.}. Similarly, one can substitute out the riskless rate in the system of equations so that letting the planner choose the tax rate is equivalent to letting it choose the after tax riskless rate faced by the experts. The tax on capital purchases affects wages faced by workers and riskless rates faced by experts precisely the same way. Using the capital tax, however, achieves the changes by distorting investment. It then follows by an application of the envelope theorem that as long as the labor and deposit issuance taxes are set optimally, the tax on capital purchases has to be zero. In effect, all gains from manipulating the wage and the riskless rate are attained by the direct taxes and there remains no additional motive for altering investment. Appendix \ref{app:cap-tax-proof} gives a simple formal proof using the primal approach.

\subsubsection{Optimal allocations and taxes \label{subsubsec:optimal-tax}}

I solve the planner's recursive problem using a value function iteration method the details of which are in Appendix \ref{app:numerical-method}. Before turning to optimal tax schedules, let us inspect the optimal allocations chosen by the Ramsey planner and compare them to the competitive equilibrium without taxes as well as the first-best. When presenting optimal allocations and taxes, a value for the Lagrange multiplier state $\mu$ should be fixed. As discussed before, $\mu = 0$ is special in that it maps to time-0 choices of the planner.\footnote{The mapping is not one-to-one. States with $\mu=0$ may be visited also when $t>0$ with possibly infinite occurrences.} The figures that follow below plot allocations and taxes over slices of the state space with $\mu=0$ as a natural benchmark and a comparison point to the competitive equilibrium without policy and the first-best. However, the planner's redistributive motives strongly influence these choices and make them dependent on the Pareto weight parameter $\lambda$. Indeed, the redistributive motive would lead the planner to choose broadly similar time-0 policies even in the absence of any exogenous shocks in the model.\footnote{In the absence of exogenous shocks, the single tradable one-period claim completes the markets. Therefore, the competitive equilibrium without policies is efficient and the planner only acts to redistribute.} To explore the role of optimal policies in improving risk sharing, we will look at crisis dynamics under optimal policies and compare to those presented in Section \ref{subsec:crises}.

\begin{figure}[h]
\begin{center}
    \includegraphics[scale = 0.70]{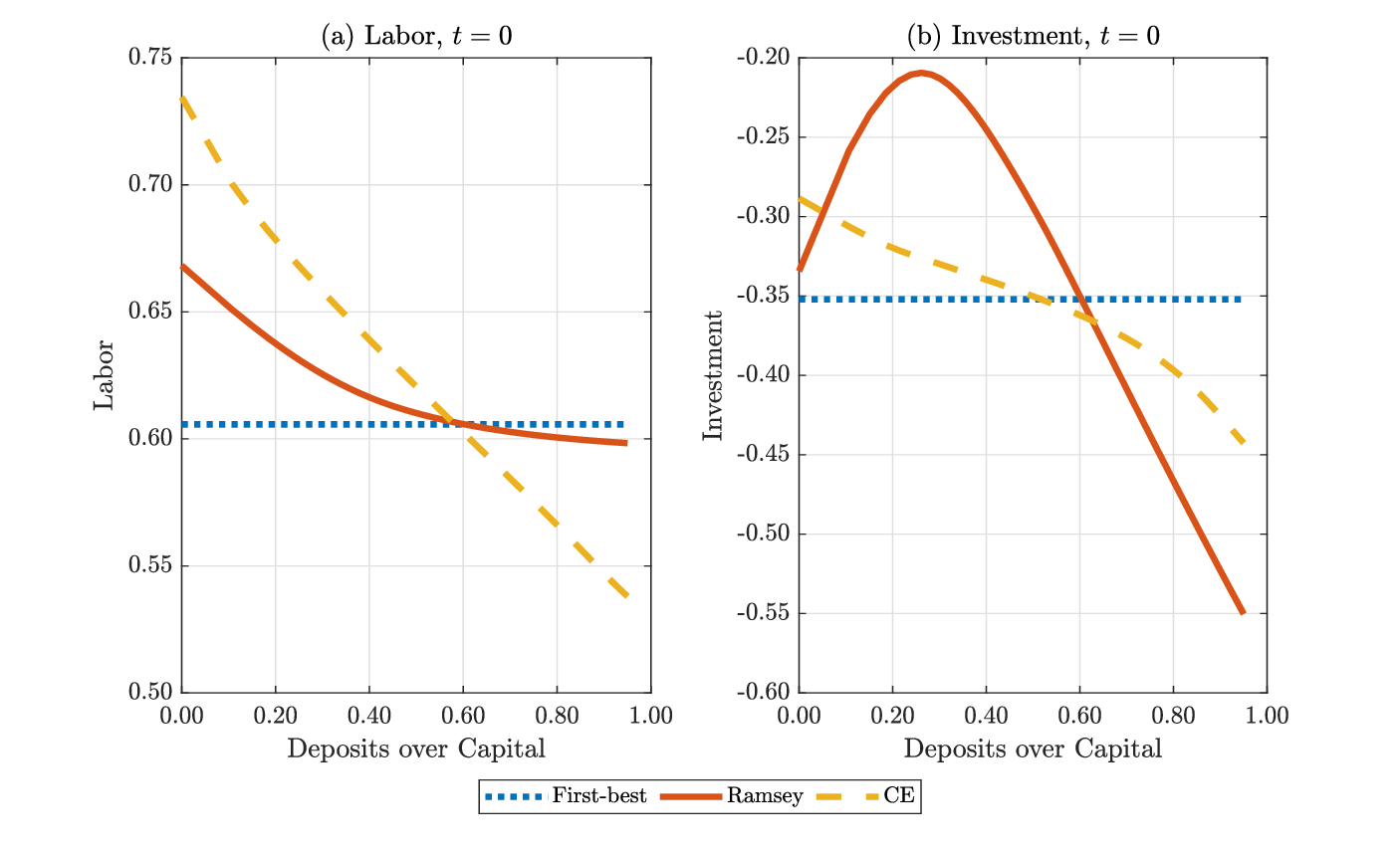}
    \caption{\label{fig:il_ramsey} Time-0 labor and investment under the planner's optimal policies. $K$ is fixed at the mean of the competitive equilibrium without policies, $\zeta=\underline{\zeta}$ and $Z=\underline{Z}$.}
\end{center}
\end{figure}

Figure \ref{fig:il_ramsey} shows the optimal labor and investment values at time 0 under the planner's optimal policies. Recall from Section \ref{sec:first-best} that first-best policies should not vary across the deposits dimension. In contrast, labor in the competitive equilibrium under no policies is a decreasing function of deposits due to the wealth effect. The Ramsey-optimal policy strives to increase labor supply when deposits are high and to decrease labor when deposits are low. At the same time, optimal choices of labor are dictated not only by efficiency but also redistribution concerns. The planner potentially uses the tax on labor to transfer resources to or from the experts. More generally, there are two ways for the planner to move closer to the first-best allocations. First, the planner can try to induce desirable outcomes at any given state of the economy. Second, the planner can alter the outcomes in the current period so as to move the state of the economy to where it is easier to induce allocations close to the first-best.

Panel (b) of Figure \ref{fig:il_ramsey} indicates that the optimal schedule of investment is more heavily affected by the redistributive motive because in much of the state space, the investment chosen by the planner is further away from the first-best level than the investment under no policies. We will soon confirm the importance of the deposit issuance tax -- key in shaping investment -- in redistribution.

\begin{figure}[h]
\begin{center}
    \includegraphics[scale = 0.70]{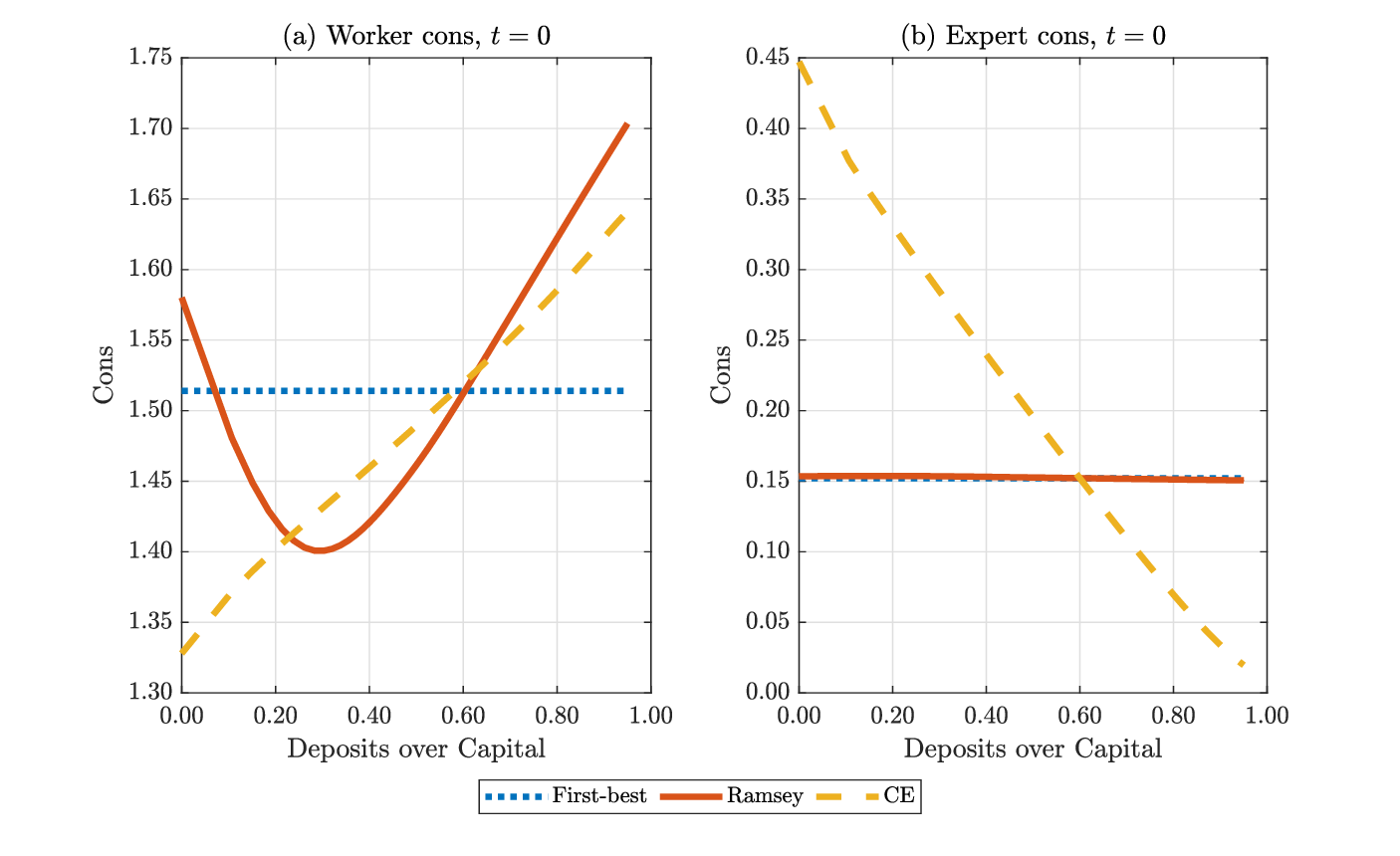}
    \caption{\label{fig:cons_ramsey} Time-0 consumptions under the planner's optimal policies. $K$ is fixed at the mean of the competitive equilibrium without policies, $\zeta=\underline{\zeta}$ and $Z=\underline{Z}$.}
\end{center}
\end{figure}

Figure \ref{fig:cons_ramsey} illustrates the impact of optimal policies on time-0 consumption of the agents. Panel (b) shows that experts' consumption varies greatly over the states where the economy spends much of the time (recall Table \ref{tab:calib-targets}). The Ramsey-optimal allocations feature substantially less variation in experts' consumption and, in that regard, get very close to the first-best outcomes. Panel (a), on the other hand, indicates that workers' consumption under optimal policies is farther away from the first-best than the allocation that obtains in the absence of policies. Again, as we will see, dynamic considerations shape such apparent paradoxes.

\begin{figure}[h]
\begin{center}
    \includegraphics[scale = 0.70]{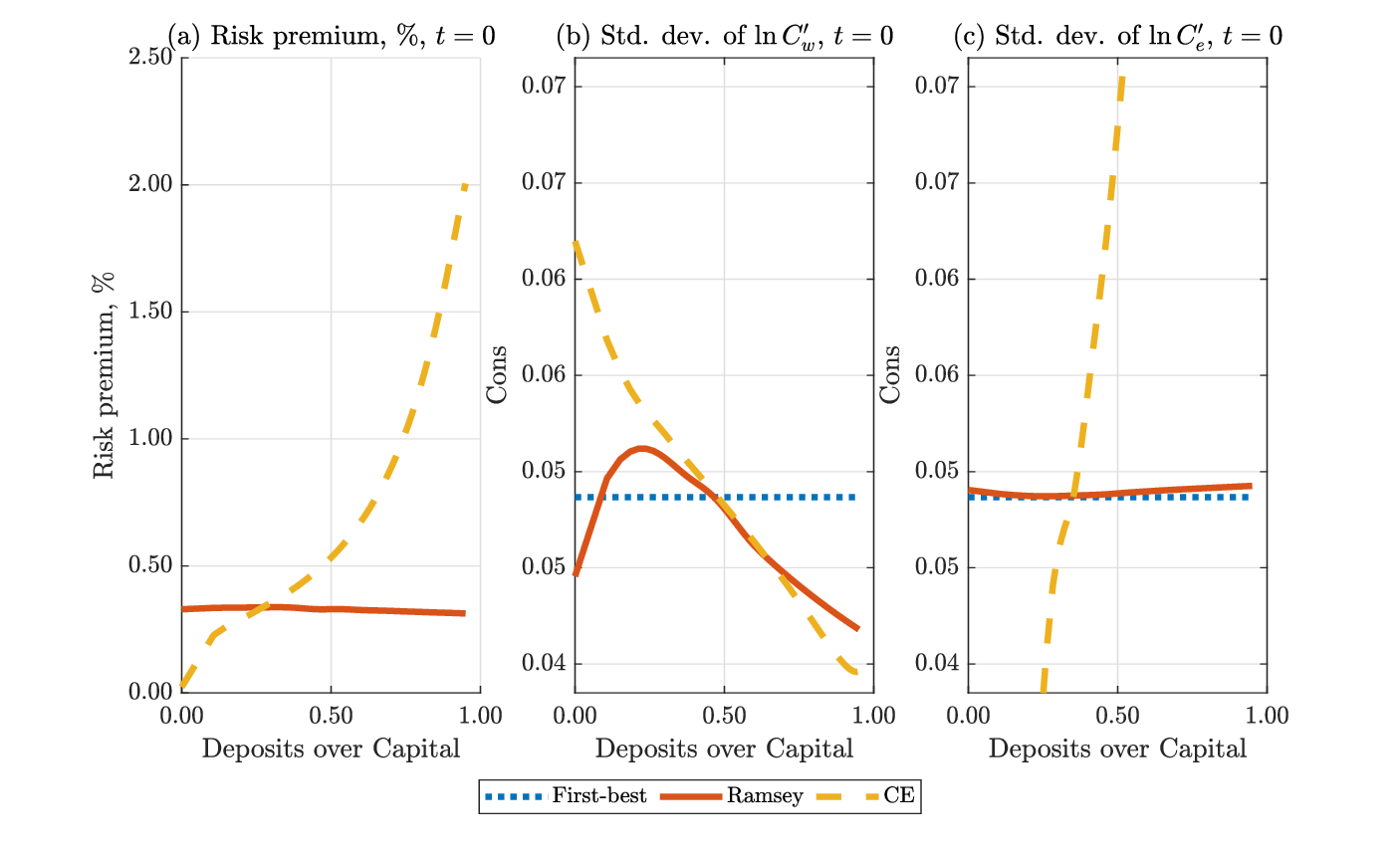}
    \caption{\label{fig:risk_ramsey} Time-0 risk premia and conditional volatility of consumption under the planner's optimal policies. $K$ is fixed at the mean of the competitive equilibrium without policies, $\zeta=\underline{\zeta}$ and $Z=\underline{Z}$.}
\end{center}
\end{figure}

Turning towards dynamic effects, Figure \ref{fig:risk_ramsey} mirrors Figure \ref{fig:risk} and explores the behavior of the risk premium and one-period ahead consumption volatility under the optimal plan.\footnote{The risk premium shown in the figure is now defined in consideration of policy, i.e. $RP = \mathbb{E}\left[\frac{\zeta' \left(R(\widetilde{\textbf{S}}') + (1 - \delta)q(\widetilde{\textbf{S}}') + \Pi (\widetilde{\textbf{S}}')\right)}{q(\widetilde{\textbf{S}})} \Bigg| \widetilde{\textbf{S}}\right]  - \frac{r(\widetilde{\textbf{S}})}{1-\tau_d(\widetilde{\textbf{S}})}$} Panel (a) displays a significant reduction in the risk premium over much of the state space. Panels (b) and (c) show that the planner goes a long way in equalizing the consumption volatilities of the agents. In the right-most region of the state space where deposits are relatively high, experts' consumption volatility is reduced at the expense of that of workers. The opposite pattern holds when deposits are low. In the intermediate region, however, the planner can reduce the volatility of experts' consumption without increasing that of workers' consumption.

I now turn to the taxes and transfers that induce these allocation patterns. Optimal taxes can be backed out using formulas \eqref{eq:cap-tax}-\eqref{eq:dep-tax} and transfers can be computed from the government's budget constraint. Panel (a) of Figure \ref{fig:transf} depicts the transfer $T(\widetilde{\textbf{S}})$ as a fraction of output. As expected, the transfer is positive to the right of a threshold and negative to the left of it. Transfers are very large compared to output which reflects the strength of the planner's redistributive motive at time 0. Panel (b) speaks to the cyclicality of the transfer policy by plotting the difference between transfer-to-output ratios for the disaster state and the normal state where productivity is at its median. Unsurprisingly, the magnitude of transfers is amplified when a bad shock is realized. Lastly, the reason for transfers being U-shaped and decreasing in absolute value as deposits decline, is related to the composition of the transfer.

\begin{figure}[h]
\begin{center}
    \includegraphics[scale = 0.70]{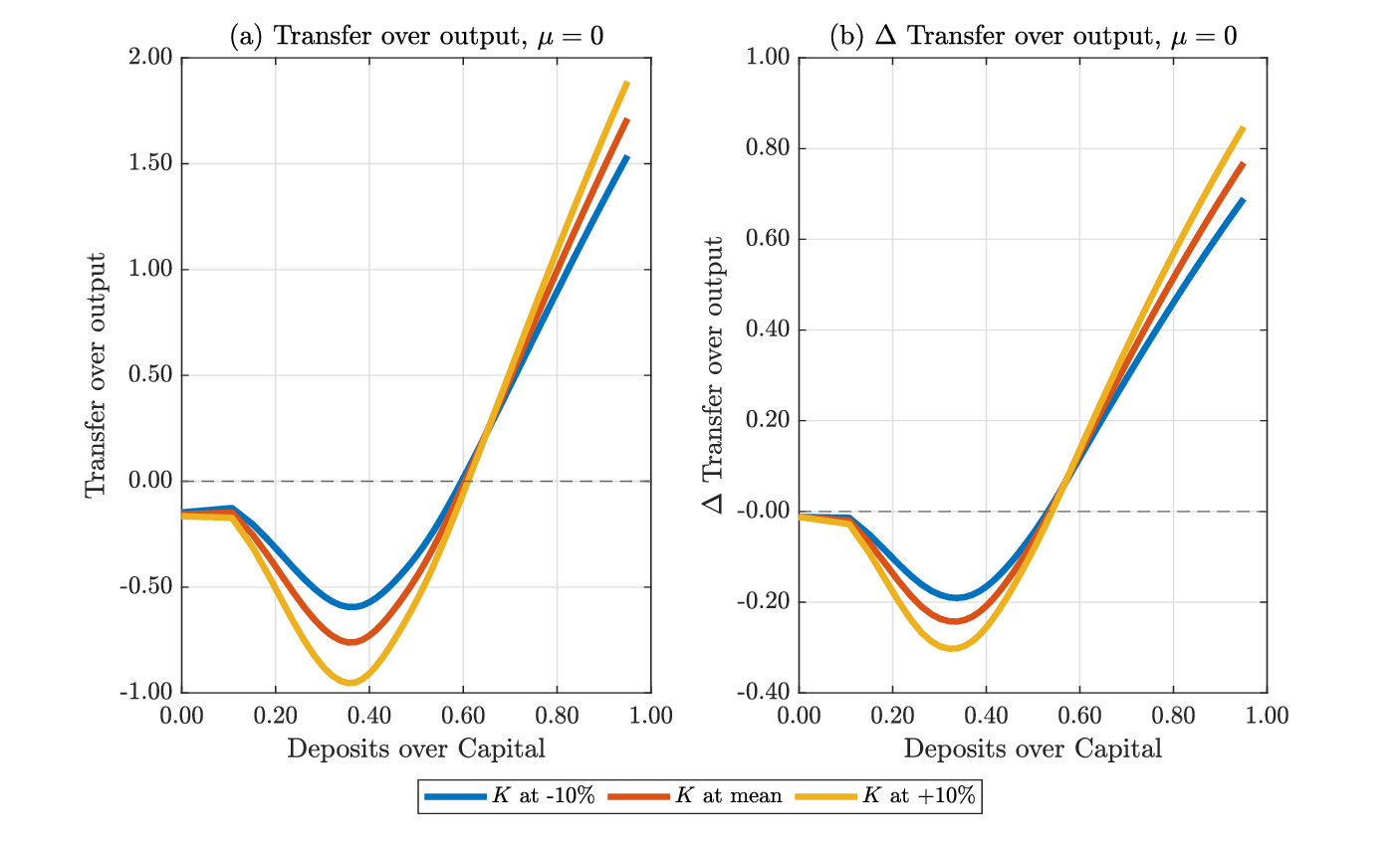}
    \caption{\label{fig:transf} Implied time-0 transfers. $K$ is fixed at the mean of the competitive equilibrium without policies, $\zeta=\underline{\zeta}$ and $Z=\underline{Z}$. Panel (b) shows the difference between $\zeta=\underline{\zeta}$, $Z=\underline{Z}$ and $\zeta=1$, $Z$ at median.}
\end{center}
\end{figure}

The planner can use either of the two tax instruments to redistribute but the relative efficiency of these tools varies over the state space. Figures \ref{fig:tax_lab} and \ref{fig:tax_r} show the optimal time-0 taxes on labor and deposit issuance. Comparing these plots to Figure \ref{fig:transf} reveals that when deposits are relatively high, positive transfers to experts are financed exclusively using the deposit issuance tax. The positive deposit issuance tax reduces experts' willingness to issue deposits and creates downward pressure on the riskless rate faced by the workers. Because the revenues from the tax are rebated to experts they gain resources and, at the same time, perceive deposit issuance to be more costly.

\begin{figure}[h]
\begin{center}
    \includegraphics[scale = 0.70]{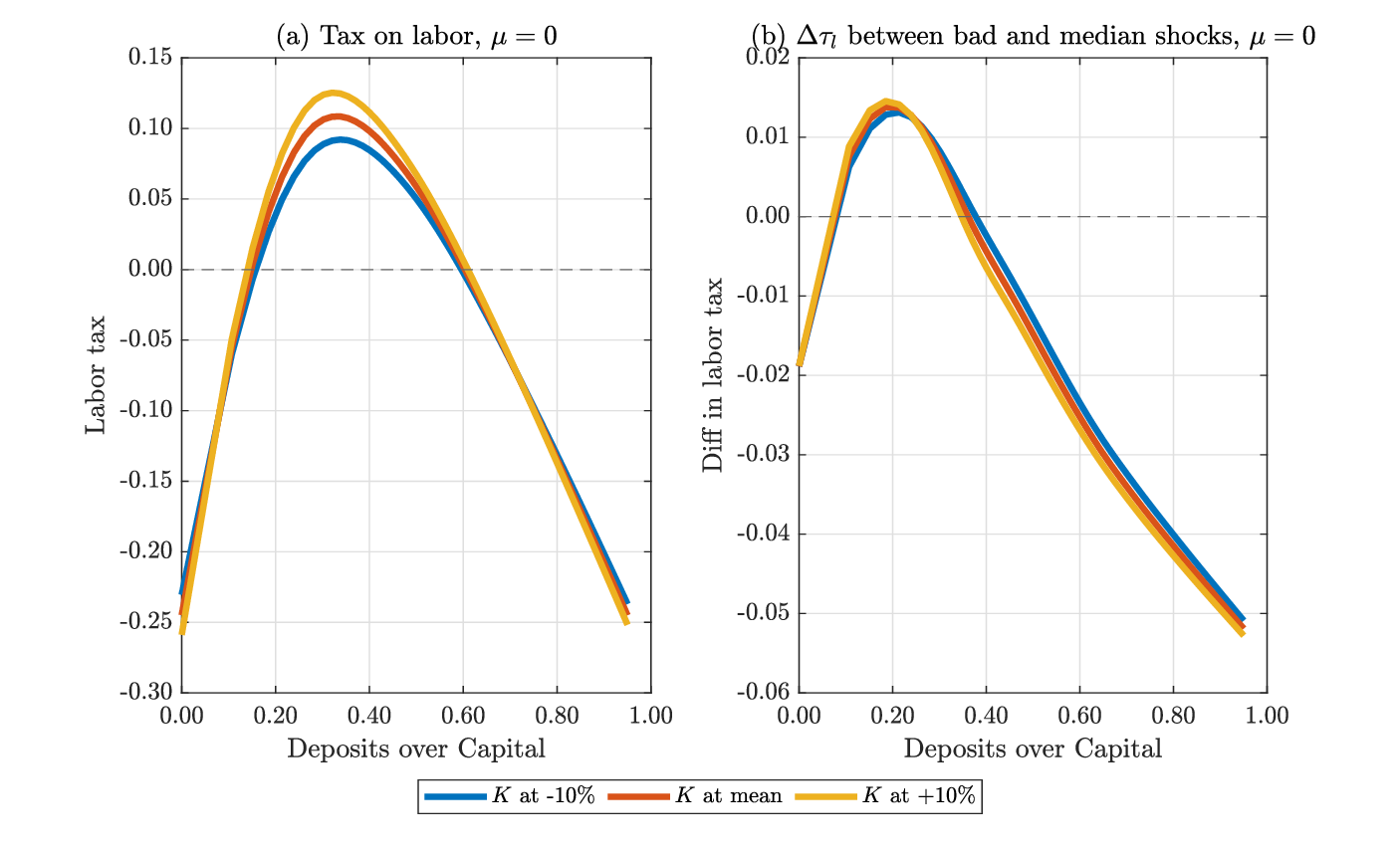}
    \caption{\label{fig:tax_lab} Implied time-0 labor tax. $K$ is fixed at the mean of the competitive equilibrium without policies, $\zeta=\underline{\zeta}$ and $Z=\underline{Z}$. Panel (b) shows the difference between $\zeta=\underline{\zeta}$, $Z=\underline{Z}$ and $\zeta=1$, $Z$ at median.}
\end{center}
\end{figure}

Rents of this nature are similar to rents arising from monopoly in the deposit market. Suppose for a moment that only a single expert exists who internalizes the dependence of the equilibrium rate on the quantity of deposits it chooses. Then, an agent with such market power would choose to issue less deposits and would pay a lower equilibrium rate. As noted before, redistribution through the deposit issuance tax can also be thought of through the lens of quantity controls. If the policy-maker imposes a binding cap on deposits that experts can issue,\footnote{An exact cap corresponding to the optimal tax can be trivially set to the equilibrium quantity of deposits under optimal policy. A quantity control corresponding to a negative tax would instead be a quantity floor.} the equilibrium rate will be bid down creating rents for the experts.

Raising revenue using the tax on deposits leads to investment deviating further away from the first-best, as evident in panel (b) of Figure \ref{fig:il_ramsey}. It turns out, however, that this tax allows the planner to quickly move the economy to a region of the state space where close-to-efficient investment can be achieved. To the extent that the deposit tax takes care of redistribution when deposits are high, the labor tax is left to its role of improving the efficiency of labor supply. Consistent with panel (a) of Figure \ref{fig:il_ramsey}, the planner subsidizes labor and increases equilibrium outcome towards the first-best level.

\begin{figure}[h]
\begin{center}
    \includegraphics[scale = 0.70]{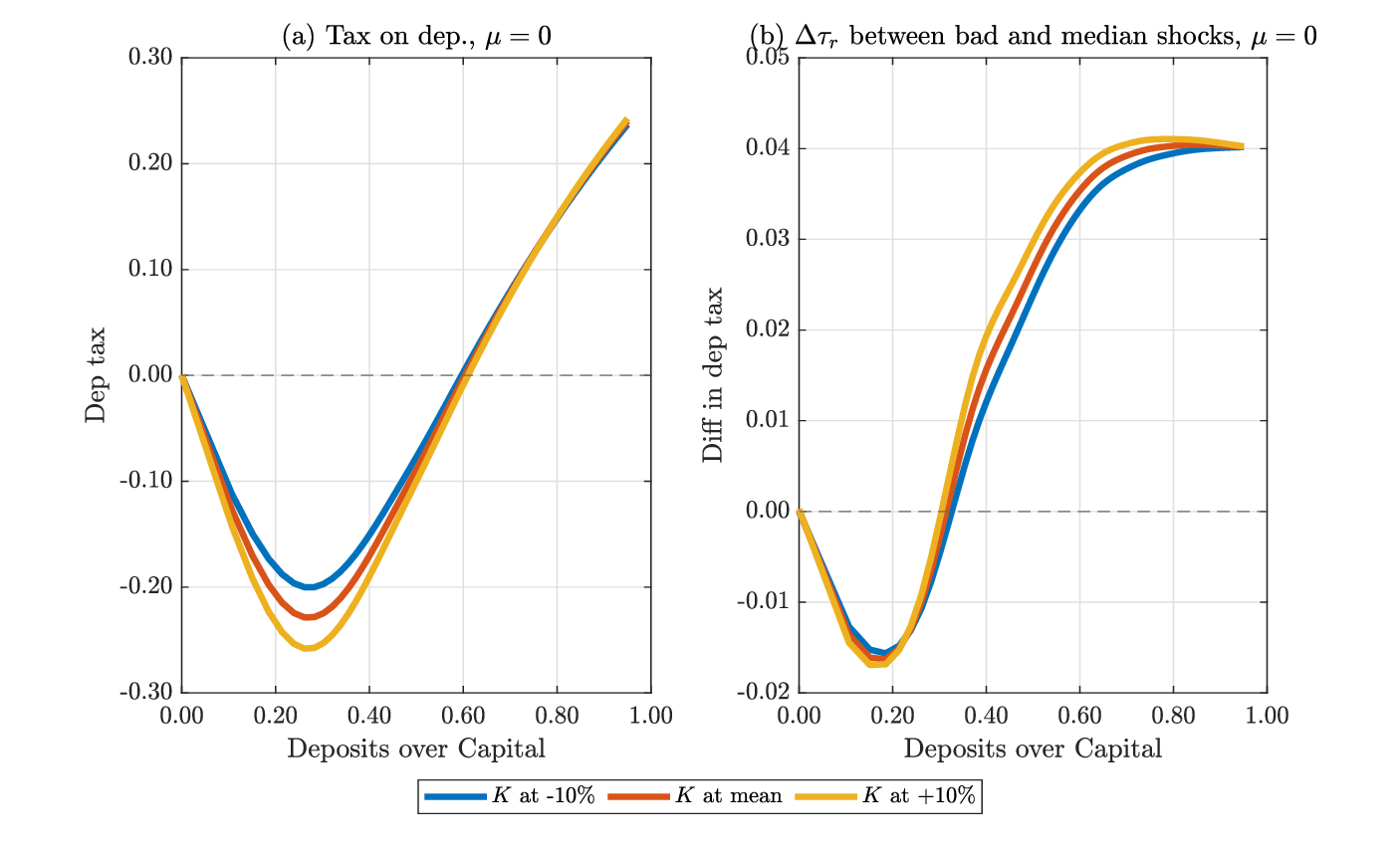}
    \caption{\label{fig:tax_r} Implied time-0 tax on deposit issuance. $K$ is fixed at the mean of the competitive equilibrium without policies, $\zeta=\underline{\zeta}$ and $Z=\underline{Z}$. Panel (b) shows the difference between $\zeta=\underline{\zeta}$, $Z=\underline{Z}$ and $\zeta=1$, $Z$ at median.}
\end{center}
\end{figure}

As we move to the left along the horizontal axis, a key force shaping changes in taxes is the efficiency of using the tax on deposit issuance in redistribution. As the tax base declines, a larger distortion is needed to redistribute. Consequently, the labor tax begins to take the role of the redistributing device. As we get closer to the region with very low deposits, the labor tax becomes the main driver of transfers and, as such, needs to be negative in order to channel resources towards the comparatively poorer workers. This might seem surprising given that equilibrium labor under optimal policies is lower than that under no policies, as panel (a) of Figure \ref{fig:il_ramsey} shows. However, observe that the policies not only change the marginal incentives but also create an income effect on labor supply. Moreover, the income effects induced by transfers to workers with nearly-zero wealth are relatively large.

This section concludes with a result that sheds further light on the dependence of optimal taxes on the efficiency of using individual tax instruments.

\begin{thm}
\label{thm:inel-labor}
When labor supply is perfectly inelastic, i.e. $\nu = \infty$, the Ramsey planner can achieve the first-best outcomes. Taxes can be raised by only using the labor tax and setting the taxes on deposit issuance and capital purchases to zero. Moreover, the planner's solution is time-consistent.
\end{thm}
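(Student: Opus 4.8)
The plan is to show that, in the inelastic-labor limit, the distortive labor tax becomes a non-distortionary (lump-sum) instrument, which collapses the Ramsey program \eqref{eq:ram-val}--\eqref{eq:ram-primal-bounds} onto the first-best program \eqref{eq:sp-val}--\eqref{eq:sp-rc} of Section \ref{sec:first-best}. First I would pin down the limiting behaviour of labor supply: with $u_w(c,l)=\frac{c^{1-\gamma}}{1-\gamma}-\psi\frac{l^{1+\nu}}{1+\nu}$, the intratemporal condition $-u_{wl}/u_{wc}=(1-\tau_l)F_L$ reads $\psi l^{\nu}c^{\gamma}=(1-\tau_l)W$, so $l=\big((1-\tau_l)W/(\psi c^{\gamma})\big)^{1/\nu}\to 1$ as $\nu\to\infty$, independently of prices and taxes; meanwhile the disutility term $\psi l^{1+\nu}/(1+\nu)\to 0$ for $l\le 1$ and diverges for $l>1$. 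Hence labor sits at the kink $L_t\equiv 1$ and the worker's felicity reduces to CRRA consumption utility. The same kink argument shows the social planner of Section \ref{sec:first-best} also chooses $L_t=1$ (output rises in $L$ at zero disutility cost up to the kink), so the labor allocation coincides across the two problems.

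The central step is to argue that the measurability constraint \eqref{eq:impl} ceases to bind. I would \emph{not} substitute the labor tax out through the MRS condition \eqref{eq:mrs} (the route used to derive \eqref{eq:impl} under elastic labor); with $L$ fixed, the after-tax wage $\widetilde W_t\equiv(1-\tau_{lt})W_t$ is instead a \emph{free} state-contingent instrument, because labor does not respond to it. Writing the worker's budget as $C_{wt}+D_{t+1}/r_t=D_t+\widetilde W_t$ and eliminating $r_t$ with the worker's Euler equation gives $C_{wt}u_{wc}(C_{wt})+\beta\mathbb{E}_t[u_{wc}(C_{wt+1})]D_{t+1}=D_t u_{wc}(C_{wt})+\widetilde W_t\,u_{wc}(C_{wt})$; since $\widetilde W_t$ is free, this equation \emph{determines the required labor tax} rather than restricting the consumption allocation. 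Thus only the resource and capital-accumulation constraints remain, the Ramsey program coincides with the unconstrained planner's program \eqref{eq:sp-val}--\eqref{eq:sp-rc}, and the first-best is attained, financed entirely by the now-lump-sum labor tax. Equivalently, one may invoke Section \ref{sec:first-best} directly: inelastic labor makes the labor tax an exact substitute for the lump-sum transfer shown there to deliver the first-best.

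To verify the tax claims I would read the capital and deposit taxes off \eqref{eq:cap-tax} and \eqref{eq:dep-tax} at the first-best allocation. The capital tax is zero by Theorem \ref{thm:cap-tax}, which applies once labor and deposit taxes are set optimally. For the deposit tax I would use that the first-best features perfect risk sharing, $u_{ec}(C_{et+1})/u_{ec}(C_{et})=u_{wc}(C_{wt+1},L_{t+1})/u_{wc}(C_{wt},L_t)$ state by state (equation \eqref{eq:sp-euler}); taking conditional expectations equalizes the two ratios in \eqref{eq:dep-tax}, so $\tau_d=0$. Hence only the labor tax is active, as claimed.

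Finally, for time-consistency I would argue the measurability multiplier is identically zero. Since the first-best maximizes the objective subject only to technology and is attained in the constrained problem for the deposit process implied by the displayed equation, the shadow value $\eta_t$ of relaxing \eqref{eq:impl} vanishes for all $t$. In the recursive saddle-point representation \eqref{eq:saddle-begin}--\eqref{eq:saddle-end} this means the promise-keeping state stays at $\mu=0$ along the optimum and $\widetilde V^{pl}$ is independent of $\mu$, so $\widetilde V^{pl}(K,D,Z,\zeta,0)=\widetilde V^{pl}(K,D,Z,\zeta,\mu)$; re-optimizing at any date from the physical state $(K_t,Z_t,\zeta_t)$ reproduces the original plan, which is the definition of time-consistency. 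The main obstacle I anticipate is making the ``\eqref{eq:impl} does not bind'' step fully rigorous---formally justifying that the free after-tax wage absorbs the worker's budget residual so that \eqref{eq:impl} merely prices the labor tax, and confirming that the implied deposit path, together with the transfers needed to equalize marginal utilities, respects the natural borrowing limits \eqref{eq:ram-primal-bounds} and keeps $\widetilde W_t\ge 0$ so that labor supply remains at the kink.
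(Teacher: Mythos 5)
Your proposal is correct and follows essentially the same route as the paper: labor is pinned at $L=1$ so the after-tax labor income $(1-\tau_l)F(Z,K,1)$ becomes a free lump-sum instrument, which makes the worker's deposit/implementability constraint non-restrictive (it merely pins down $\tau_l$ residually), collapsing the Ramsey program onto the social planner's problem \eqref{eq:sp-val}--\eqref{eq:sp-rc}; the deposit tax is then zero by comparing the conditional-expectation form of the perfect risk-sharing condition \eqref{eq:sp-euler} with the two Euler equations, and time-consistency follows because the constraint's multiplier vanishes so only the physical state matters. The paper's appendix proof is exactly this argument phrased on constraint \eqref{eq:dlom} rather than on the primal constraint \eqref{eq:impl}, so no substantive difference remains.
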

\begin{proof}
See Appendix \ref{app:inel-labor-proof}.
\end{proof}

The intuition for the result is simple: the labor tax amounts to a lump-sum tax when labor supply is perfectly inelastic. Since the transfer is already assumed to be lump-sum from experts' viewpoint, the planner achieves the first-best as in Section \ref{sec:first-best}. Recall, that in the first-best risk, is fully shared and there is no reason for the planner to use the tax on deposit issuance.

\subsubsection{Crisis dynamics under optimal policy \label{subsubsec:ramsey-crises}}

\begin{figure}[h!]
\begin{center}
    \includegraphics[scale = 0.70]{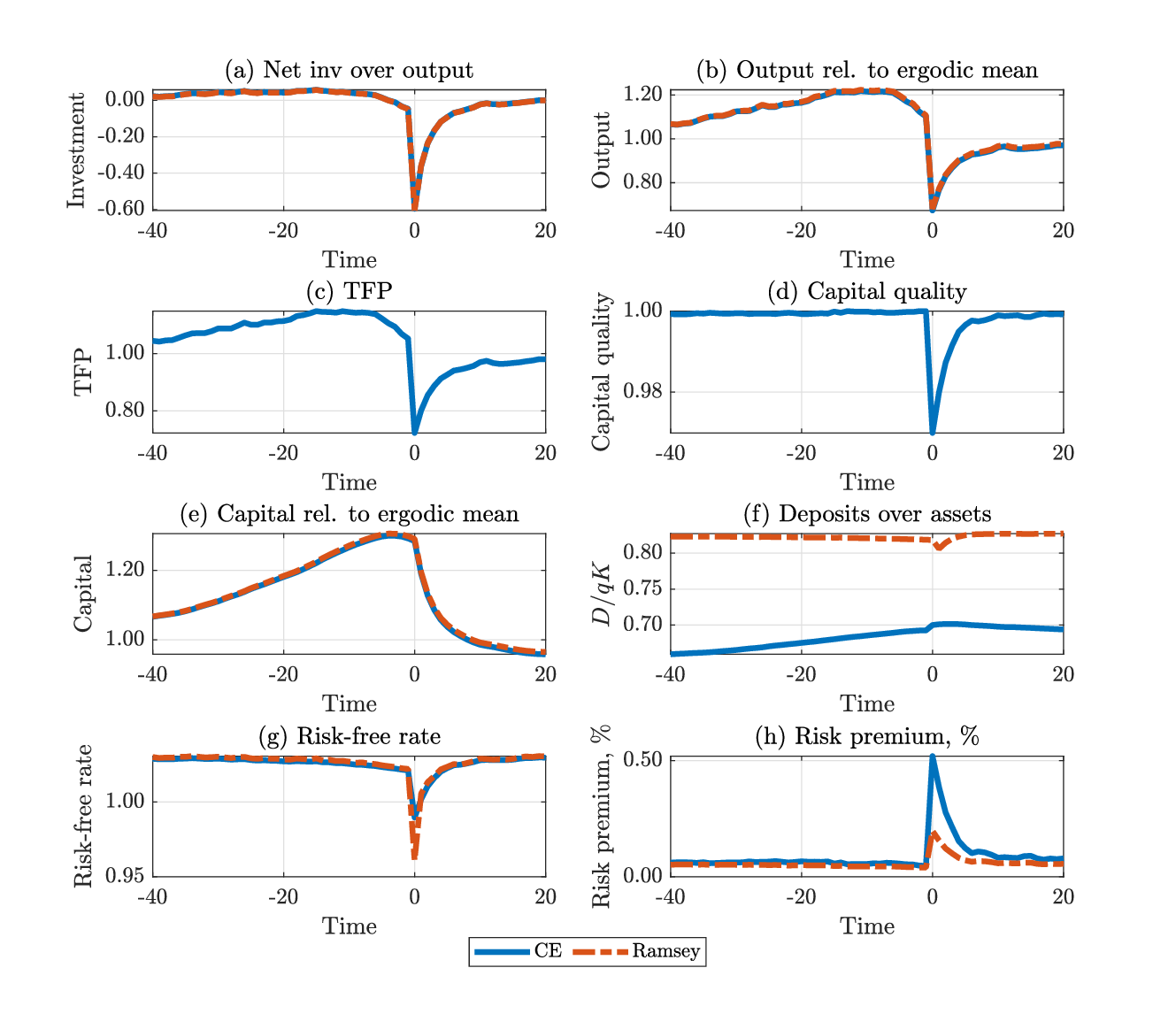}
    \caption{\label{fig:crises_ram} Crisis dynamics under Ramsey policies compared to dynamics under no policies. Panels (b) and (e) plot variables relative to their \textit{respective} ergodic means.}
\end{center}
\end{figure}

The policies and allocations described in the previous section expressed the planner's desire to redistribute between the agents. Even though similar time-0 patterns would arise in a model without exogenous shocks, the presence of shocks means that the planner needs to continually intervene to improve risk sharing. In particular, the aggregate shocks move the economy along the wealth distribution dimension by differentially affecting workers and experts. In this section, we examine the dynamics of Ramsey policies and the allocations induced by them. 

Towards that goal, Figure \ref{fig:crises_ram} repeats Figure \ref{fig:crises_eq} by overlaying the Ramsey allocations over those obtained under no policy. More precisely, the economy under Ramsey policies is simulated in the exact same way as the economy under no policies, fixing the sequence of exogenous shocks and starting with the same initial state augmented with $\mu_0=0$. Eliminating an initial burn-in period from the simulation ensures that the crisis dynamics is not affected by the strong initial redistribution motives of the planner.\footnote{I verify numerically that the dynamics under Ramsey policies is stable and that increasing the number of burn-in periods does not affect the results.} Finally, the episodes from the simulation under policy that are plotted are taken to be the same ones identified for the no-policy simulation.

\begin{figure}[h]
\begin{center}
    \includegraphics[scale = 0.70]{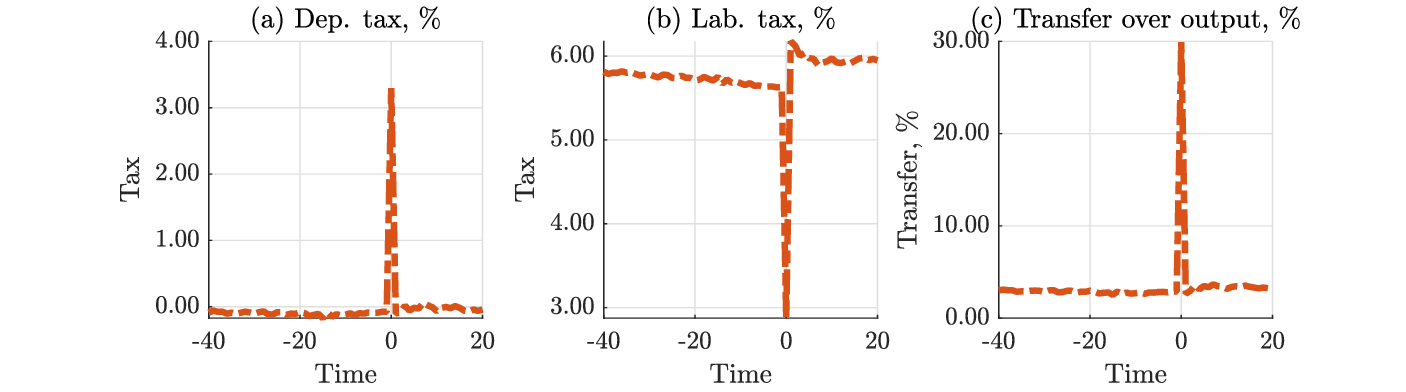}
    \caption{\label{fig:crises_tax} Dynamics of optimal taxes and transfers over crisis episodes.}
\end{center}
\end{figure}

Panel (f) demonstrates a key change in the behavior of the economy under optimal policies. The wealth of experts is substantially lower than it is under no policy for two related reasons. First, as Figure \ref{fig:crises_tax} below indicates, the planner provides some positive transfer to the experts most of the time. Second, experts respond to the insurance implicit in the policies and know that if a bad shock hits when their wealth is low, they will receive more transfers. One might be tempted to call this second effect ``moral hazard.'' However, the planner fully anticipates such effects and has the prudential instruments to counter them, if desired. Panel (a) of Figure \ref{fig:crises_tax} shows that, leading up to the crises, the tax on deposit issuance is largely inactive but plays an important role in responding to shocks. That the prudential component of policy is so muted has to do with the fact that the planner can, to a large extent, realize the gains from better risk sharing, as evident in Figure \ref{fig:risk_ramsey}. Therefore, the planner can let the economy become highly levered and respond to shocks ``ex-post.''

Panel (a) makes it clear that crises under optimal policies are of similar severity as those under no policies. It is important to note though that modest improvement does indeed take place. While net investment as a fraction of output drops about $1$ percentage points more on average when crises hit, investment under Ramsey policies recovers faster. Over the periods $10-20$, this leads to capital and output under Ramsey policies (relative to the respective ergodic means) being about $0.7$ percentage points higher than those under no policies.

Consistent with the dynamics of the tax on deposit issuance, panels (g) and (h) show the riskless rate dropping much more under Ramsey policies than under no policy. At the same time, the risk premium spikes much less, partly due to the fact that the positive deposit issuance tax increases the riskless rate faced by experts. As is apparent in panel (f), the dynamics of wealth distribution is quite different under the optimal policies. Namely, experts become relatively wealthier on impact due to redistribution and, as other panels indicate, this moves the economy to states where recovery is quicker than in the economy without taxes.

Lastly, the behavior of the labor tax shown in panel (b) of Figure \ref{fig:crises_tax} may seem puzzling. The labor tax drops on impact and immediately jumps to a higher level where it remains persistently. Such dynamics is not unusual for the labor tax in incomplete-markets environments. \citet{aiyagari2002optimal} observe similar behavior when studying the policies of a planner who needs to finance an exogenous stream of expenditures and can only trade a riskless bond with the representative agent. \citet{faraglia2019long} clarify a role for the dynamics that is present in my model as well. The decline in the labor tax on impact followed by the promise of higher rates going forward makes the workers desire more saving and, in equilibrium, helps achieve the goal of reducing the riskless rate and redistributing to the experts. In models studied by the above authors, the policy-maker wants to reduce the rate it pays on its own debt whereas in my model, the planner attempts to make borrowing effectively cheaper for the agents affected more by the negative shock. The persistence of the increase in the tax in the following periods, then, is a consequence of the desire to smooth distortions in the labor market.

\subsection{Welfare analysis \label{subsec:ramsey-welfare}}

Next, I investigate the welfare implications of implementing the optimal policies. A consumption-based welfare criterion is needed in order to make welfare across the two economies comparable. Following \citet{boar2020efficient}, I convert the objective of the planner into a constant consumption equivalent $\omega$ as follows. Given a value of the planner's objective, $V^{pl}$, $\omega$ is defined by
\begin{align*}
    V^{pl} &= \mathbb{E}_0 \left[ \sum_{t = 0}^\infty \beta^t \left(\lambda u_e(\omega) + (1-\lambda) u_w(\omega,0) \right)\right]
\end{align*}

That is, $\omega$ is the amount that, if consumed by both agents in all periods and all states of the world and with the workers not suffering any labor disutility, delivers value to the planner equal to $V^{pl}$. Given the specification of the utility functions in Section \ref{subsec:calib}, the definition of $\omega$ reduces to a simple transformation of the planner's value
\begin{align*}
    \omega &= \left((1 - \beta) (1-\gamma) V^{pl}\right) ^ {\frac{1}{1 - \gamma}}
\end{align*}

\begin{figure}[h!]
\begin{center}
    \includegraphics[scale = 0.70]{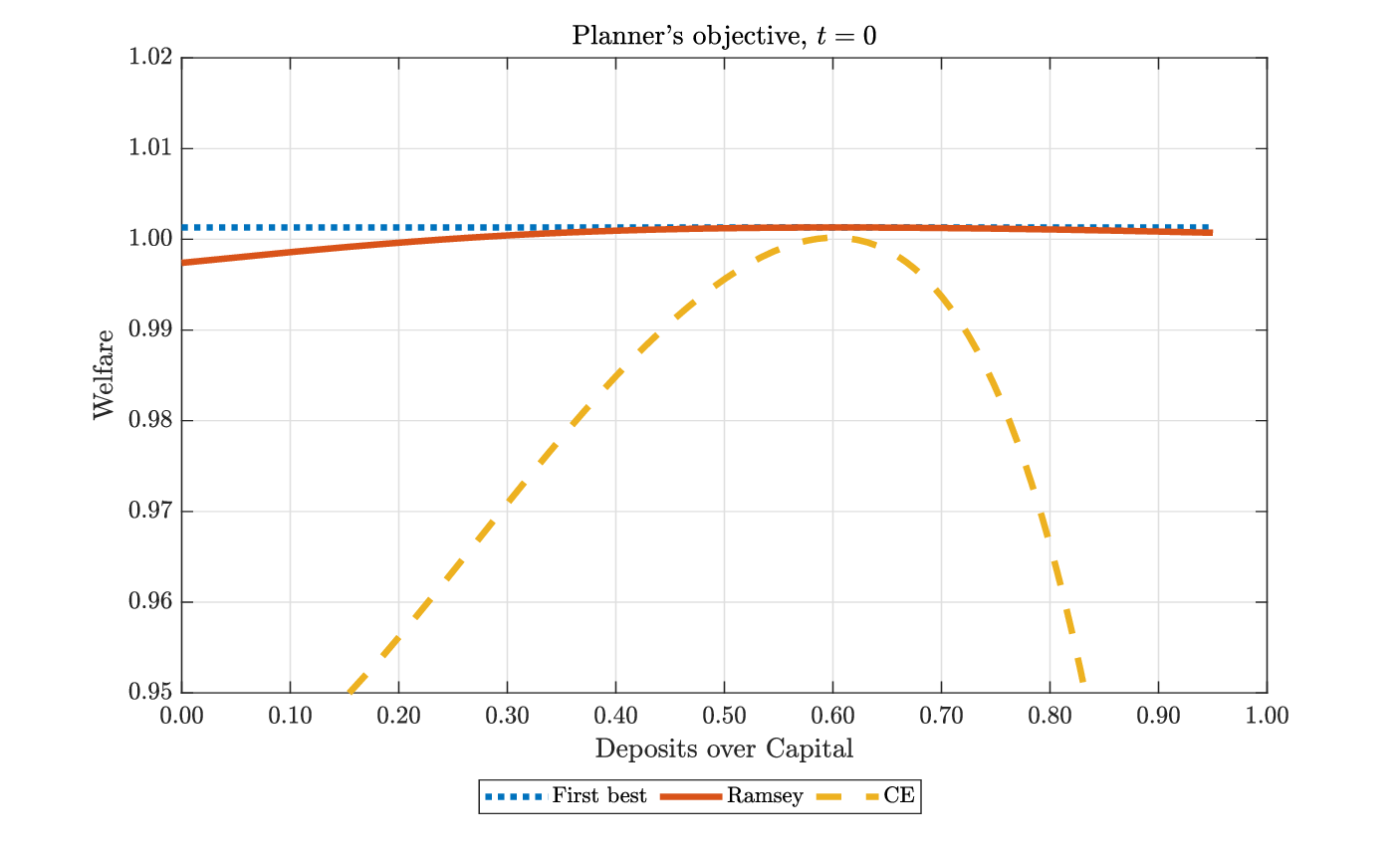}
    \caption{\label{fig:planner_welfare} Welfare at time 0 under optimal policies versus welfare under no policy. $K$ is fixed at the mean of the competitive equilibrium without policies, $\zeta=\underline{\zeta}$ and $Z=\underline{Z}$. Panel (b) shows the difference between $\zeta=\underline{\zeta}$, $Z=\underline{Z}$ and $\zeta=1$, $Z$ at median.}
\end{center}
\end{figure}

We can similarly find the consumption equivalent of the planner's value under unregulated equilibrium allocations, as well as the first-best allocations. Figure \ref{fig:planner_welfare} plots all three measures over a slice of the state space. First, note that the planner can get very close to the first-best level of welfare. Second, optimal policies are naturally most beneficial when wealth distribution is unequal. However, the welfare gains can be very modest if the initial wealth distribution is close to what the planner prefers. The competitive equilibrium without policies already features substantial risk sharing when wealth distribution is in the intermediate range (see Figure \ref{fig:risk}). Only when the exogenous shocks move the economy towards the extremes, does risk sharing become significantly worse. This fact, in conjunction with rarity of crises, explains why welfare gains are modest when wealth inequality is not extreme. Note also, that the point at which minimal welfare gains relative to the competitive equilibrium are obtained can be interpreted as the point with the planner's ``ideal'' wealth distribution. If we mute the exogenous shocks in the model, the two welfare curves will touch precisely at such a point. Therefore, the welfare gains at that point purely reflect improvement in efficiency and insurance. Furthermore, the portion of total welfare gains that are due to efficiency reasons must be increasing as we go to the extremes. In Section \ref{subsec:welfare-decomp}, I will verify this by decomposing the welfare gains into gains from efficiency, redistribution and insurance.

\subsection{Welfare decomposition \label{subsec:welfare-decomp}}

Conceptually, there are three sources for welfare gains in this model. First, optimal policy can improve aggregate efficiency by increasing expected total consumption or reducing labor. Second, the planner would like to redistribute across the agents and alter fractions of total consumption that each agent gets to enjoy. Third, policy can reduce the variance of agents' consumptions by improving insurance.

I use the methodology proposed in \cite{bhandari2021decomp} to decompose the welfare gains due to optimal policy into components that correspond to the three components above. Their method is more general than that of \cite{benabou2002tax} and \cite{floden2001effectiveness} and is more appropriate for settings like mine.

\begin{figure}[h]
\begin{center}
    \includegraphics[scale = 0.70]{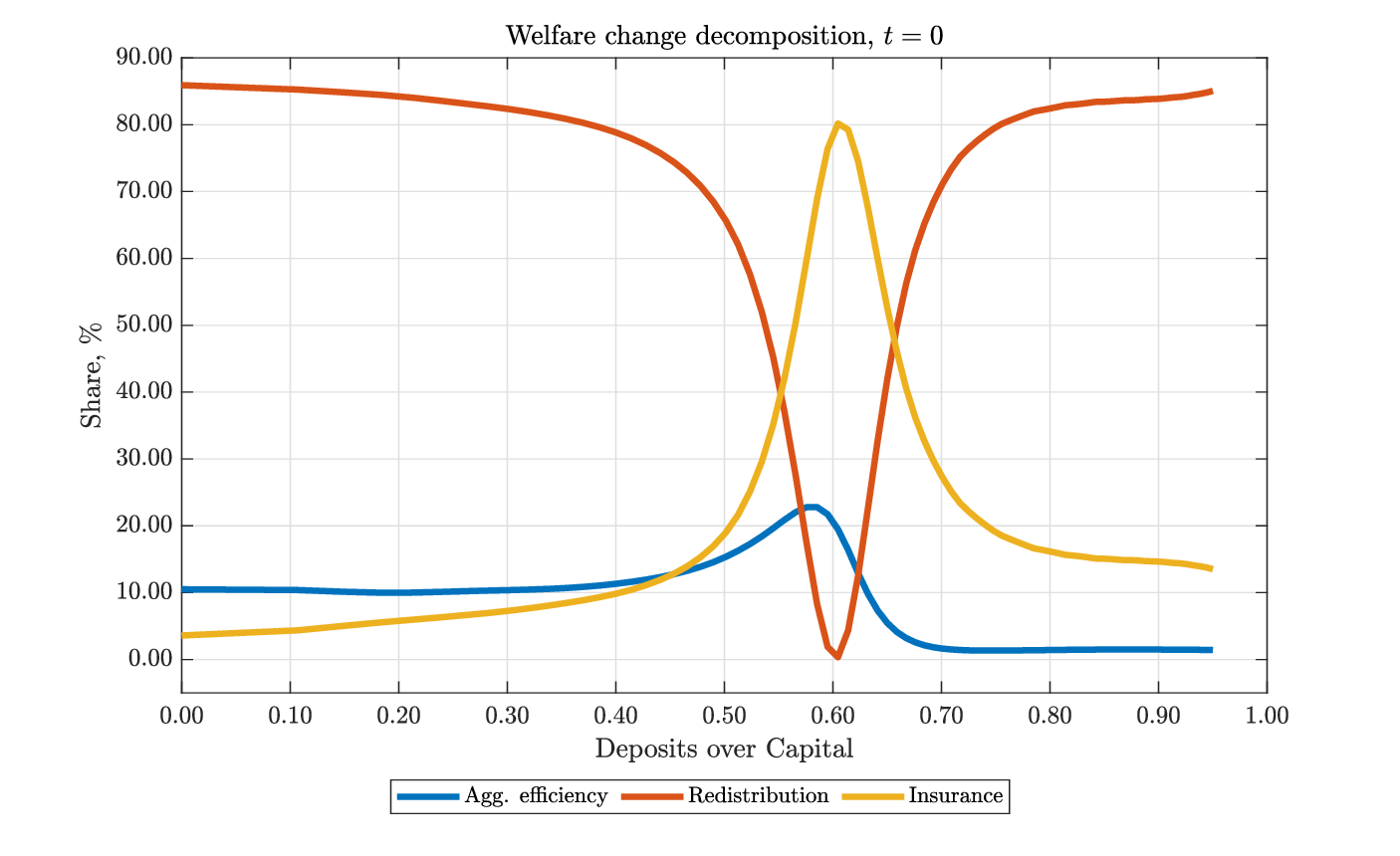}
    \caption{\label{fig:decomp_old} Welfare decomposition, time 0.}
\end{center}
\end{figure}

Figure \ref{fig:decomp_old} shows the decomposition across a slice of the state space. Gains due to redistribution account for most of the welfare gains in much of the state space and approach zero at a particular value of the deposits-to-capital ratio that depends on the Pareto weight $\lambda$. Welfare gains at that point are entirely due to improvements in efficiency and insurance. Note that, even though the fractions of gains attributable to efficiency and insurance do not rise as we go to the extremes of the state space, the absolute gains do increase (recall Figure \ref{fig:planner_welfare}). This is intuitive because risk sharing worsens in the extremes of the state space and, consequently, equilibrium labor and investment deviate from their first-best levels.

\subsection{Alternative policies \label{subsec:alter-pol}}

This section explores alternative policies inspired by what we learn about optimal policies from the previous sections.

\subsubsection{Simple rules \label{subsubsec:simple-rules}}

We found that the tax on deposit issuance finances most of the transfers in the region where the economy spends most of the time. The shape of the tax over the deposits-over-capital dimension has two distinct properties. First, the tax is increasing over most of the state space and crosses the zero line at some threshold. Second, the tax becomes decreasing when deposits are low and less redistribution is financed through it. The first property reveals that the tax on deposits aims to move the economy towards a desired wealth distribution. When the tax is positive, experts receive positive transfers and limit their deposit issuance because they perceive issuance to be more costly. Conversely, when the tax is negative, experts lose resources to workers and perceive deposit issuance to be less costly.

Motivated by these patterns, consider a simple policy rule that involves only a tax on deposits given by the following function of the state $\textbf{S}=\{K,D,Z,\zeta\}$
\begin{align*}
    \tau_d(\textbf{S}) = \tau_{d1} D \left(\frac{D}{q(\textbf{S})K}-\tau_{d2}\right)
\end{align*}

The interpretation of this quadratic form is that it targets a desirable wealth distribution or leverage given by $\tau_{d2}$. The tax function is then scaled by the quantity of deposits to express the idea that the tax on deposit issuance should decline when the tax base declines because of the inefficiency that is caused when using the tax to redistribute. Importantly, the economy -- both under no policy and under a simple policy of this type -- will not spend much time in the region with low deposits.

\begin{figure}[h!]
\begin{center}
    \includegraphics[scale = 0.70]{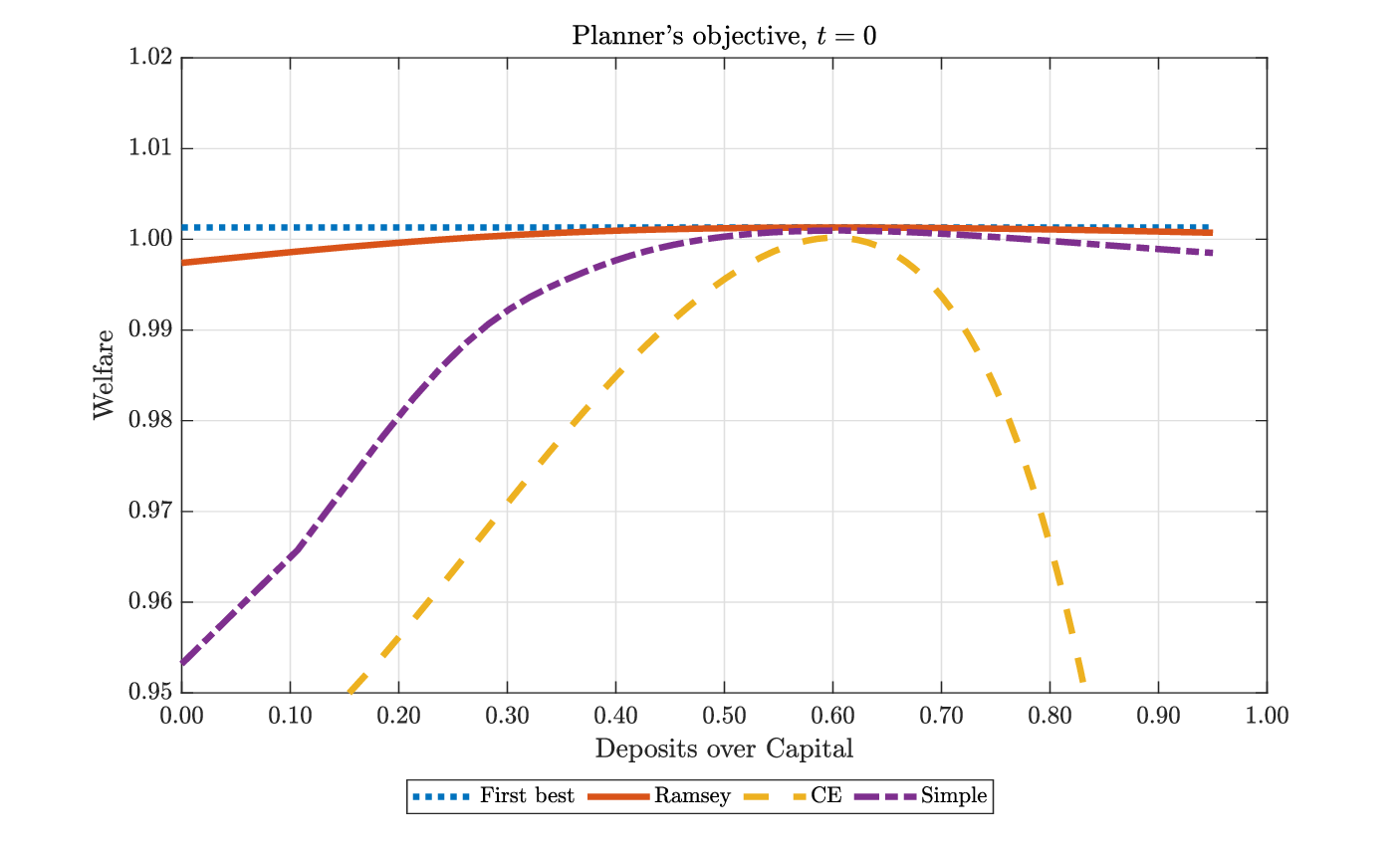}
    \caption{\label{fig:welfare_comparison} Welfare comparisons at time 0. $K$ is fixed at the mean of the competitive equilibrium without policies, $\zeta=\underline{\zeta}$ and $Z=\underline{Z}$. Panel (b) shows the difference between $\zeta=\underline{\zeta}$, $Z=\underline{Z}$ and $\zeta=1$, $Z$ at median.}
\end{center}
\end{figure}

An optimal simple rule can be found by optimizing in the space $\{\tau_{d1},\tau_{d2}\}$. The optimal policy of course depends on the initial state at which the planner optimizes.\footnote{Once again, the dependence of optimal policy on the starting point is a symptom of time-inconsistency.} For simplicity, in what follows, I suppose that a simple rule is chosen under an average value criterion. More concretely, the policy-maker evaluates a simple rule by calculating the expectation of the consumption equivalent of the planner's value over the ergodic distribution of the equilibrium under no policy. Figure \ref{fig:welfare_comparison} mirrors Figure \ref{fig:planner_welfare} and adds the consumption equivalent of the planner's value under the optimized simple rule. The simple rule attains most of the gains associated with Ramsey policies except in the region with low deposits, as expected.

The simple rule proposed here has similarities to simple policies prescribed elsewhere in the literature. \citet{bianchi2018optimal}, for instance, propose a ``macroprudential Taylor rule'' that levies a tax on the borrowing firm's debt that is zero at and below a threshold level of debt and increasing above it. The logic of that rule overlaps with that discussed here: when debt is too high, efficiency dictates lowering it. There are important differences, however. First, reducing the debt of experts in my model serves not only an efficiency but also a redistributive purpose. Second, the simple tax discussed above is symmetric and active also when expert debt is too low. Symmetricity, too, serves the dual purpose of efficiency and redistribution. Recall from Figure \ref{fig:il_ramsey} that the first-best level of investment is below the competitive level when deposits are low: experts are wealthy and do not internalize the excessive exposure of workers to the risk.


\section{Conclusion\label{sec:concl}}

This paper explored macroprudential regulation in a model with financial frictions and household heterogeneity. Inefficient risk sharing in this framework creates scope for occasional transfers to experts who invest in capital. These transfers are financed by other agents who benefit from those transfers indirectly, through increased wages in the future. Based on the mismatch in payers and direct beneficiaries of transfers, I asked to what extent prudential regulation was justified.

First, I found that prudential regulation has redistributive implications and can, by itself, be used to finance transfers. Quantitatively, transfers in crises episodes are indeed mostly financed by a tax on deposit issuance. The tax reduces the equilibrium riskless rate and effectively creates rents for the experts. Moreover, I established an analytical result according to which the planner must set the tax on capital purchases to zero. The policy-maker should not appeal to the tax on capital purchases precisely because it can control the relevant margins of redistribution using the remaining tax instruments.

Second, I found that the tax on deposit issuance is not active in the lead up to crises. The reason for this is that the optimal macroprudential policy gets close to the efficient outcomes and, in particular, equalizes risk across the agents to a substantive degree. Since risk is shared well, the agents are mostly in agreement when it comes to the amount of investment and the planner does not intervene much.

Finally, I explored the welfare implications of using optimal policies and emphasized the importance of gains due to the redistributive motive. The gains due to better risk sharing are modest when the initial state of the economy features good risk sharing but increase if the starting state is associated with high leverage. Moreover, I demonstrated that a simple macroprudential rule that can be interpreted to be targeting an ideal leverage can achieve most welfare gains across much of the state space.

A feature often present in models of macro-finance that was omitted here to preserve computational tractability is a limited participation constraint. The constraint induces a pecuniary externality and can improve the quantitative performance of the model. Relatedly, my model assumed no frictions between the finances of ``financial intermediaries'' and their owner-experts. A limited participation or equity friction would allow to decouple the finances of the firm and its owner and study episodes where owners of the firms suffer limited losses but the intermediaries they own sharply lose capacity to channel resources to investment. This is an interesting avenue for future research.

\pagebreak

\bibliographystyle{aer}
\bibliography{library}

\pagebreak

\appendix
\section{Proofs and Derivations} \label{app:proofs}

\subsection{Equilibrium conditions\label{app:eqbm-cond}}

The first-order conditions of experts' problem \eqref{value-expert}-\eqref{trans-expert} with respect to consumption and next-period deposits imply the following Euler equation
\begin{align}
    u_{ec}(c_e(k_e,d_e,\textbf{S})) = \beta \frac{r(\textbf{S})}{1-\tau_d(\textbf{S})} \mathbb{E}\left[ u_{ec}(c_e(k_e',d_e',\textbf{S}'))\right] \label{eq:ind-expert-euler}
\end{align}

Moreover, the first-order condition with respect to next-period capital implies the arbitrage condition
\begin{align}
    \mathbb{E}_t\left[u_{ec}(c_{e}(k_e',d_e',\textbf{S}'))\right]\frac{1 + \tau_{k}(\textbf{S})}{1-\tau_{d}(\textbf{S})}r(\textbf{S}) &= \mathbb{E}\left[u_{ec}(c_{e}(k_e',d_e',\textbf{S}')) \left(\frac{\zeta' \left(R(\textbf{S}') + (1 - \delta)q(\textbf{S}') + \Pi(\textbf{S}')\right)}{q(\textbf{S})} \right)\right] \label{eq:ind-arbitrage}
\end{align}

The first-order conditions of workers' problem \eqref{value-worker}-\eqref{trans-worker} similarly yield
\begin{align}
    u_{wc}(c_w(d_w,\textbf{S}), l_w(d_w, \textbf{S})) = \beta r(\textbf{S}) \mathbb{E}\left[ u_{wc}(c_w(d_w',\textbf{S}'),l_w(d_w', \textbf{S}'))\right] \label{eq:ind-worker-euler}
\end{align}

as well as the labor supply condition
\begin{align}
    -\frac{u_{wl}(c_w(d_w,\textbf{S}),l_w(d_w, \textbf{S}))}{u_{wc}(c_w(d_w,\textbf{S}),l_w(d_w, \textbf{S}))} &= (1 - \tau_{l}(\textbf{S}))W(\textbf{S}) \label{eq:ind-mrs}
\end{align}

Applying aggregation to these conditions and recognizing that individual states coincide with aggregate ones, as well as using the experts' optimal investment decision and final good producers' optimality conditions we arrive at the following system of equilibrium conditions
\begin{align}
    -\frac{u_{wl}(C_{w}(\textbf{S}),L(\textbf{S}))}{u_{wc}(C_{w}(\textbf{S}),L(\textbf{S}))} &= (1 - \tau_{l}(\textbf{S}))F_L(Z, K, L(\textbf{S})) \label{eq:eqcond-mrs} \\
    u_{wc}(C_{w}(\textbf{S}),L(\textbf{S})) &= \beta r(\textbf{S}) \mathbb{E}_t\left[ u_{wc}(C_{w}(\textbf{S}'),L(\textbf{S}')) \right] \label{eq:app-worker-euler}\\
    (1-\tau_{d}(\textbf{S}))u_{ec}(C_{e}(\textbf{S})) &= \beta r(\textbf{S}) \mathbb{E}\left[ u_{ec}(C_e(\textbf{S}')) \right] \label{eq:app-expert-euler}\\
    \mathbb{E}\left[u_{ec}(C_{e}(\textbf{S}'))\right]\frac{1 + \tau_{k}(\textbf{S})}{1-\tau_{d}(\textbf{S})}r(\textbf{S}) &= \mathbb{E}\left[u_{ec}(C_{e}(\textbf{S}')) \left(\frac{\zeta' \left(R(\textbf{S}') + (1 - \delta)q(\textbf{S}') + \Pi(\textbf{S}')\right)}{q(\textbf{S})} \right)\right] \label{eq:app-arbitrage}\\
    1 &= q(\textbf{S})\Phi'\left(\frac{I(\textbf{S})}{K}\right)\\
    Y(\textbf{S}) &= I(\textbf{S}) + C_w(\textbf{S}) + C_e(\textbf{S})\\
    K'(\textbf{S}) &= \zeta' K\left(\Phi\left(\frac{I(\textbf{S})}{K}\right)  +  (1-\delta)\right)\\
    \frac{D'(\textbf{S})}{r(\textbf{S})} + C_{w}(\textbf{S}) &= D + F_L(Z,K,L(\textbf{S})) L(\textbf{S}) (1-\tau_{l}(\textbf{S}))  \label{eq:eqcond-dlom}
\end{align}

\subsection{Proof of Proposition \ref{thm:cap-tax}\label{app:cap-tax-proof}}

Assuming interior solution, the first-order conditions of the problem \eqref{eq:ram-val}-\eqref{eq:ram-primal-bounds} with respect to the consumption of experts, as well as investment yield the following Euler equation
\begin{align}
    &u_c(C_{et}) = \beta \mathbb{E}_t\left[u_c(C_{et+1}) \zeta_{t+1}\left(F_K(Z_{t+1},K_{t+1},L_{t+1}) + (1-\delta) + \frac{\Phi\left(\frac{I_{t+1}}{K_{t+1}}\right)}{\Phi'\left(\frac{I_{t+1}}{K_{t+1}}\right)} - \frac{I_{t+1}}{K_{t+1}} \right)  \right] \Phi'\left(\frac{I_t}{K_t}\right)
\end{align}

Noting the expressions for $q_t$ and $\Pi_t$, the above can be rewritten as
\begin{align}
    &u_c(C_{et}) = \beta \mathbb{E}_t\left[u_c(C_{et+1}) \left(\frac{\zeta_{t+1}(F_K(Z_{t+1},K_{t+1},L_{t+1}) + (1-\delta) + \Pi_{t+1}}{q_t} \right)  \right]
\end{align}

The corresponding equilibrium condition (essentially, a combination of equations \eqref{eq:app-expert-euler} and \eqref{eq:app-arbitrage}) is 
\begin{align}
    &u_c(C_{et}) = \beta \mathbb{E}_t\left[u_c(C_{et+1}) \left(\frac{\zeta_{t+1}(F_K(Z_{t+1},K_{t+1},L_{t+1}) + (1-\delta) + \Pi_{t+1}}{q_t (1+\tau_{kt})} \right)  \right]
\end{align}

Comparing the previous two equations reveals that
\begin{align}
    \tau_{kt} = 0
\end{align}

\subsection{Proof of Proposition \ref{thm:inel-labor}\label{app:inel-labor-proof}}

When labor is perfectly inelastic, the constant level of equilibrium labor is given by
\begin{align}
    L^* = \lim_{\nu \to \infty} \left(\frac{(1-\tau_l) W C_w^{-\gamma}}{\psi}\right) ^{\frac{1}{\nu}} = 1
\end{align}

In particular, labor supply does not respond to the tax $\tau_l$ and the budget constraint of the worker looks like
\begin{align}
    c_w + \frac{d_w'}{r(\textbf{S})} = & d_w + \left(1 - \tau_l(\textbf{S})\right)F(Z,K,1)
\end{align}

Clearly, $\left(1 - \tau_l(\textbf{S})\right)F(Z,K,1)$ is only a function of the aggregate state and can be chosen by the planner arbitrarily
\begin{align}
    \Hat{T}(\textbf{S}) = \left(1 - \tau_l(\textbf{S})\right)F(Z,K,1)
\end{align}

Recognizing this reveals that the constraint \eqref{eq:dlom} of the planner's problem cannot be binding since any feasible value $D_{t+1}$ can be chosen by the planner by choosing the appropriate value of the transfer residually. This implies that the planner's problem reduces to the social planner's problem \eqref{eq:sp-val}-\eqref{eq:sp-rc}. Integrating the condition \eqref{eq:sp-euler} that arises from that problem
\begin{align}
    \frac{\mathbb{E}_t\left[u_{ec}(C_{wt+1})\right]}{u_{ec}(C_{wt})} = \frac{\mathbb{E}_t\left[u_{wc}(C_{wt+1},L_{t+1})\right]}{u_{wc}(C_{wt},L_t)}
\end{align}

Comparing this to equations \eqref{eq:app-worker-euler} and \eqref{eq:app-expert-euler} implies that the tax on deposit issuance must be set to zero.

Finally, since the state of the social planner is only the physical state $\textbf{S}$, the solution is time-consistent.

\section{Extension with non-unitary measures of agents} \label{app:non-unit}

\subsection{Model extension \label{app:non-unit-model}}

Consider an extension of the model where there is a measure $\theta$ of experts and a measure $1-\theta$ of workers. In this section, I show that for any value of the labor disutility scaling parameter $\psi$ and for any $\theta$ there exists a different value for the scaling parameter, $\hat{\psi}$, so that the equilibrium aggregates under $\psi$ and unitary measures coincide with the aggregates under $\hat{\psi}$ and measures $\theta$ and $1-\theta$.

Conjecture that all prices, rates and aggregate allocations are the same in the two economies. To verify that those constitute an equilibrium in the new economy, begin by the experts' Euler equation \eqref{eq:ind-expert-euler} and apply aggregation
\begin{align}
    u_{ec}\left(\frac{C_e(\textbf{S})}{\theta}\right) &= \beta \frac{r(\textbf{S})}{1-\tau_d(\textbf{S})} \mathbb{E}\left[ u_{ec}\left(\frac{C_e(\textbf{S}')}{\theta}\right)\right]
\end{align}

By assumption of the CRRA functional form, the equation reduces to
\begin{align}
    u_{ec}\left(C_e(\textbf{S})\right) &= \beta \frac{r(\textbf{S})}{1-\tau_d(\textbf{S})} \mathbb{E}\left[ u_{ec}\left(C_e(\textbf{S}')\right)\right]
\end{align}

The workers' measure is eliminated from the corresponding Euler equation in a similar fashion. The optimal investment decision holds as follows
\begin{align}
    q(\textbf{S}) = \frac{1}{\Phi'\left(\frac{I(\textbf{S})/\theta}{K/\theta}\right)} = \frac{1}{\Phi'\left(\frac{I(\textbf{S})}{K}\right)}
\end{align}

and per unit profits are the same
\begin{align}
    \Pi(\textbf{S}) = q(\textbf{S})\Phi\left(\frac{I(\textbf{S})/\theta}{ K/\theta}\right)-\frac{I(\textbf{S})/\theta}{K/\theta} =  q(\textbf{S}')\Phi\left(\frac{I(\textbf{S}')}{ K'}\right)-\frac{I(\textbf{S}')}{K'}
\end{align}

The arbitrage equation \eqref{eq:ind-arbitrage} then reduces to the equilibrium condition \eqref{eq:app-arbitrage} by the CRRA assumption.

Given the functional forms, the labor supply decision of workers, \eqref{eq:ind-mrs}, takes the form
\begin{align}
    \hat{\psi} \frac{l_w(d_e,\textbf{S})^{\nu}}{c_w(d_e,\textbf{S})^{-\gamma}} &= (1-\tau_l(\textbf{S}))W(\textbf{S}) \quad \iff \\
    \hat{\psi} (1-\theta)^{-\nu-\gamma} \frac{L(\textbf{S})^{\nu}}{C_w(\textbf{S})^{-\gamma}} &= (1-\tau_l(\textbf{S}))W(\textbf{S})
\end{align}

Therefore, to support this equilibrium, I set
\begin{align}
    \hat{\psi} = \psi (1-\theta)^{\nu+\gamma}
\end{align}

Finally, the aggregated budget constraints of the two types hold trivially and the system of equilibrium conditions from Appendix \ref{app:eqbm-cond} is obtained.

\subsection{Pareto weights \label{app:non-unit-weight}}

Under the model extension, a utilitarian planner's objective becomes
\begin{align}
    &\mathbb{E}_0 \left[ \sum_{t = 0}^\infty \beta^t \left(\theta u_e\left(\frac{C_{et}}{\theta}\right) + (1-\theta) u_w\left(\frac{C_{wt}}{1-\theta},\frac{L_t}{1-\theta};\hat{\psi}\right)  \right) \right] = \\
    &\mathbb{E}_0 \left[ \sum_{t = 0}^\infty \beta^t \left(\theta^\gamma \frac{{C_{et}}^{1-\gamma}}{1-\gamma} + (1-\theta)^{\gamma} \frac{{C_{wt}}^{1-\gamma}}{1-\gamma} -(1-\theta)^{\gamma}\psi \frac{L_t^{1+\nu}}{1+\nu} \right)  \right] =\\
    &\mathbb{E}_0 \left[ \sum_{t = 0}^\infty \beta^t \left(\theta^\gamma u_e\left(\frac{C_{et}}{\theta}\right) + (1-\theta)^{\gamma} u_w\left(\frac{C_{wt}}{1-\theta},\frac{L_t}{1-\theta};\psi\right)  \right) \right]
\end{align}

Therefore, in the baseline model, taking
\begin{align}
    \lambda = \frac{\theta^\gamma}{\theta^\gamma + (1-\theta)^\gamma}
\end{align}

makes the planner's objective equivalent to that of a utilitarian planner under the extension with non-unitary measures. Moreover, the baseline value $\lambda=0.01$ parametrized in Table \ref{tab:calib} corresponds to the value $\theta \approx 0.09$, i.e. an assumption that experts comprise $9\%$ of the population.

\section{Numerical methods} \label{app:numerical-method}

\subsection{Competitive equilibrium \label{app:num-ce}}

The numerical method for finding the equilibrium consists of finding solutions to the system of functional equations \eqref{eq:eqcond-mrs}-\eqref{eq:eqcond-dlom}. To do that, I approximate aggregate variables -- allocations, prices and rates -- using cubic splines so that for any variable $\chi$, the function $\chi(\textbf{S})$ is approximated by the function $\phi(\textbf{S})c^{\chi}$ where $\phi(\textbf{S})$ is the appropriate basis and the vector $c^{\chi}$ represents the approximation coefficients. I define the cubic splines on a collection of collocation points $\{\textbf{S}_1, \cdots,\textbf{S}_{N^{c}}\}$ in the state space. Let $\textbf{c}$ denote the matrix of the coefficients for all variables. I reduce the system of equilibrium conditions to a $3\times 3$ system and find the solution of the approximated $3N^c \times 3N^c$ system of equations denoted as follows

\begin{align}
    \mathcal{G} \left(\begin{bmatrix}
    \textbf{S}_1 \\
    \vdots \\
    \textbf{S}_{N^c}
    \end{bmatrix}, \textbf{c}\right) = 0
\end{align}

The system $\mathcal{G}$ is non-linear in $\textbf{c}$ and so a good starting guess for the solution is critical and hard to find. To find a good starting guess, I employ two techniques.

First, I use an iterative method to arrive at values $\textbf{c}_{N^I}$ where the system $\mathcal{G}$ is close to zero. The iterative method relies on economic intuition to rewrite the system $\mathcal{G}$ in a forward-looking form. Let $\textbf{c}^1$ be the matrix of coefficients describing equilibrium outcomes in the next period and $\textbf{c}^0$ represent the outcomes in the current period. Then, one can solve the rewritten system
\begin{align}
    \mathcal{H} \left(\begin{bmatrix}
    \textbf{S}_1 \\
    \vdots \\
    \textbf{S}_{N^c}
    \end{bmatrix}, \textbf{c}^0,\textbf{c}^1\right) = 0
\end{align}

for values $\textbf{c}^0$ given values $\textbf{c}^1$. Importantly, $\mathcal{H}$ is a block-diagonal system and the sparse Jacobian of the system can be calculated using the analytic expression for derivatives of cubic splines.\footnote{The derivative can likewise be computed for the system $\mathcal{G}$ but that Jacobian is not as sparse.} Using this method, one can start from some initial guess $\textbf{c}_0$ and iterate $N^I$ times to find the coefficients $\textbf{c}_{N^I}$. Although this process is relatively slow and its convergence is not theoretically guaranteed, in practice, it converges for a relatively wide range of initial $\textbf{c}_0$. With large enough $N^I$, the obtained values $\textbf{c}_{N^I}$ make the system $\mathcal{G}$ close enough to zero so that a very fast Newton-type method becomes applicable.

Second, to facilitate finding a good initial guess $\textbf{c}_0$, I use a homotopy approach. The idea of the approach is to construct a sequence of parametrizations -- consisting of sequences of values $\{\sigma_{\varepsilon 0},\cdots,\sigma_{\varepsilon N^h}\}$, $\{\gamma_{0},\cdots,\gamma_{N^h}\}$ and $\{\xi_{0},\cdots,\xi_{N^h}\}$ -- so that a good guess for the equilibrium is easy to obtain for the initial parametrization and values for $N^h$ coincide with the baseline parametrization. Then, the equilibrium found for each parametrization can be used as a starting guess for the equilibrium under the next parametrization. In particular, I start from values $\sigma_{\varepsilon 0} \approx 0$, $\gamma_0 = 1$ and $\xi = 1$. This procedure allows to find the equilibrium under the baseline parametrization relatively quickly and reliably.

\subsection{Ramsey problem \label{app:num-ramsey}}

The saddle-point problem \eqref{eq:saddle-begin}-\eqref{eq:saddle-end} can be solved using standard value function iteration methods. I approximate the value function, as well as its expectation using cubic splines, solve the stage saddle-point problem at each iteration and update. A notable practical complication that arises here is the difficulty of reliably solving the stage problem at a relatively large number of grid points. I derive a sparse, block-diagonal system of first-order conditions that describe the solution to the stage problem at all grid points. This system again features substantial non-linearity and having a good initial guess is critical. When iterating, it suffices to choose as a starting guess for the solution the one found for the previous iteration. However, extra steps are needed to find a good starting guess at the first iteration, i.e. for the initial guesses of the value function and its expectation.

To find this initial guess I apply a homotopy method similar to the one used in the previous section. In particular, owing to Proposition \ref{thm:inel-labor}, it is straightforward to construct good guesses for the solution of the planner's problem when $\nu$ is very large. To that end, consider a sequence $\{\nu_0,\cdots, \nu_{N^{\nu}}\}$ such that $\nu_0=1000$ and $\nu_{N^{\nu}}$ is the value for the baseline parametrization. Then, at each iteration $j$, the stage problem is solved assuming $\nu = \nu_j$ if $j\leq N^\nu$ and $\nu = \nu_{N^\nu}$ otherwise. For large enough $N^\nu$, this procedure ensures that a good starting guess is available at each iteration. Finally, note that the sparse Hessian of the stage problem can again be conveniently calculated using analytic expressions.

\end{document}